\newtheorem{theorem}{Theorem}[section]
\newtheorem{cor}[theorem]{Corollary}
\newtheorem{proposition}[theorem]{Proposition}
\newtheorem{lem}{Lemma}[section]
\newtheorem{remark}{Remark}[section]
\newtheorem{definition}{Definition}[section]
\newtheorem{example}{Example}
\begin{document}
	\title{Solutions of the tt*-equations constructed from the \(({\rm SU}_2)_k \)-fusion ring, and Smyth potentials}
	\author{Tadashi Udagawa}
	\date{}
	\maketitle

	\begin{abstract}
		Cecotti and Vafa (\cite{CV1991}) introduced the tt*-equation (topological-anti-topological fusion equation), whose solutions describe massive deformations of supersymmetric conformal field theories. We describe some solutions of the tt*-equation constructed from the \(({\rm SU}_2)_k \)-fusion algebra. The idea of the construction is due to Cecotti and Vafa, but we give a precise mathematical formulation and a description of the ``holomorphic data'' corresponding to the solutions by using the DPW method. Furthermore, we give a relation between the solutions and the representations of \({\rm SU}_2 \). As a special case, we consider the solutions corresponding to the supersymmetric \(A_k \)-minimal model.
	\end{abstract}

	\section{Introduction}
	Cecotti and Vafa (\cite{CV1991}) introduced the tt*-equation (topological-antitopological fusion equation), whose solution describes massive deformations of supersymmetric conformal field theories. From a mathematical point of view, Dubrovin (\cite{D1993}) observed that the tt* equation is an example of an integrable p.d.e., and that solutions give examples of pluriharmonic maps into the symmetric space \({\rm GL}_n \mathbb{R}/{\rm O}_n \) \((n \in \mathbb{N})\). Based on the expected properties of solutions, Cecotti and Vafa proposed an approach to the classification of quantum field theories (\cite{CV1991}, \cite{CV1993}). However, the equation is highly nonlinear, and had been solved only in very special cases.  The radial sinh-Gordon equation
	\begin{equation}
		w_{t \overline{t}} = 2\sinh{(2w)}, \ \ \ w = w(|t|),\ t \in \mathbb{C}^*, \nonumber
	\end{equation}
	was one such case; it was known from work of McCoy-Tracy-Wu (\cite{MTW1977}) that global solutions on \(\mathbb{C}^* \) are in 1:1 correspondence with real numbers \(m \in [-1,1] \), the correspondence being given by
	\begin{equation}
		w \sim m \log{|t|} \ \ \ {\rm as}\ \ t \rightarrow 0. \nonumber
	\end{equation}
	Much later, the radial Toda equation of ``tt* type'', or tt*-Toda equation, was solved by Guest-Its-Lin (\cite{GIL20151}, \cite{GIL20152}, \cite{GIL2020}), who gave a similar description of all global solutions on \(\mathbb{C}^* \).  This series of examples includes the radial sinh-Gordon equation as a special case. \\
	
	The purpose of this paper is to give another series of examples.  Here, from the p.d.e. point of view, a solution may be constructed directly from a finite number of solutions to the radial sinh-Gordon equation, but the construction itself is quite different from the case of the tt*-Toda equation.  It involves the \(({\rm SU}_2)_k\)-fusion algebra, or Verlinde algebra, an object which has a prominent role in conformal field theory, and hence a natural physical origin. The idea of the construction is due to Cecotti and Vafa, and was sketched (in the language of conformal field theory) in \cite{CV1991}. \\
	
	Our first main result is a precise mathematical formulation of this version of the tt* equation and its solutions (section 2).  Our second main result is a description of the ``holomorphic data'' corresponding to the solutions (section 3). Physically, this data can be regarded as the chiral data of the field theory. Mathematically, it is the generalized Weierstrass data, or DPW data, of the corresponding harmonic map into the symmetric space isomorphic to \({\rm GL}_{k+1} \mathbb{R}/{\rm O}_{k+1} \).  Although this data was not considered by Dubrovin or by Cecotti and Vafa, its role in the context of harmonic map theory is well known to differential geometers, and it can be expected to play an important role in describing the solutions. We shall show that this is indeed the case. \\
	
	Our third main result (section 4) makes explicit use of the representations of \({\rm SU}_2 \) which are the building blocks of the fusion algebra. We show that a natural equivalence relation on such representations corresponds to an equally natural notion of gauge equivalence on harmonic maps. In this section, the corresponding holomorphic data are characterized by nonzero integers. For the tt*-Toda equation, the relation between holomorphic data characterized by nonzero integers and positive energy representations was investigated by M. Guest and T. Otofuji \cite{GO2022}. In our case, we consider the analogue of this relation. As a special case, we also consider the tt*-equation corresponding to the supersymmetric \(A_k \) minimal model. In \cite{CV1991}, Cecotti and Vafa considered this case.
	\\
	
	A similar analysis would be possible for the \(({\rm SU}_n)_k \)-fusion algebra, for any \(n \ge 2 \), and this more general situation was also considered by Cecotti and Vafa (\cite{CV1991}). A solution of the equation in this case corresponds to a finite number of solutions of the tt*-Toda equation, the case \(k=1 \) being the tt*-Toda equation itself. In combination with the results of \cite{GIL20151} and \cite{GIL20152} on the  tt*-Toda equation, our method leads to a description of all (global) solutions of the tt* equation arising from the \(({\rm SU}_n)_k\)-fusion algebra.  We have restricted ourselves to the case \(n = 2 \) in this article, partly for simplicity of exposition, and partly because the harmonic maps involved are of particular interest in differential geometry, as they correspond to (certain) surfaces of constant mean curvature in Minkowski space.

	\section{The tt*-equation and the \(({\rm SU}_2)_k \)-fusion ring}
	In this section, we define the tt*-equation on a flat vector bundle and we review a type of tt*-equation constructed from the \(({\rm SU}_2)_k \)-fusion ring.
	\subsection{The tt*-equations on vector bundles}
	Let \(\Sigma \) be a Riemann surface with the local coordinate \(t \) and \(E \) a holomorphic vector bundle over \(\Sigma \) with a holomorphic structure \(\overline{\partial}_E \), \(\eta \) a nondegenerate holomorphic symmetric bilinear form on \(E \) and \(g \) a Hermitian metric on \(E \). We define a conjugate-linear map \(\kappa : \Gamma(E) \rightarrow \Gamma(E) \) by \(g(a,b) = \eta(\kappa(a),b)\) for \(a,b \in \Gamma(E) \). We define the tt*-equation for the metric \(g \) on \(E \) as follows. \vspace{3mm}
	
	\begin{definition} \label{def2.1}
		Let \(Cdt \) be an \({\rm End}(E)\)-valued holomorphic 1-form. We assume that \(\kappa \) satisfies the ``reality constraint'' \(\kappa^2 = Id_E \) and \(C \) is self-adjoint with respect to \(\eta \). We define a family of connection \(\nabla^{\lambda }\ (\lambda \in S^1) \) on \(E \) by
		\begin{equation}
			\nabla^{\lambda} = \partial_E^g + \overline{\partial}_E + \lambda^{-1}Cdt + \lambda C^{\dagger_g}d\overline{t}, \nonumber
		\end{equation}
		where \(\partial_E^g \) is the Chern connection and \(C^{\dagger_g } \) is the adjoint of \(C \) with respect to \(g \). We say \(g \) satisfies the tt*-equation if \(\nabla^{\lambda} \) is flat for all \(\lambda \in S^1 \).
	\end{definition} \vspace{3mm}
	
	We omit until later the homogeneity condition (Euler condition) which is usually given in the context of Frobenius manifolds (see \cite{D1993}).  \vspace{3mm}
	
	\begin{example}
		Let \(E = \mathbb{C}^* \times \mathbb{C}^{n+1}\ (n \in \mathbb{N}) \) be a trivial vector bundle and \(\{e_j \}_{j=0}^{n} \) be the standard frame of \(E \). We consider the tt*-equation for \(\eta(e_j,e_l) = \delta_{j,n-l} \) and
		\begin{equation}
			Cdt (e_0,\cdots,e_n) = (e_0,\cdots,e_n) \left(\begin{array}{cccc}
				0 & & & 1 \\
				1 & \ddots & &   \\
				& \ddots & \ddots &  \\
				& & 1 & 0 
			\end{array} \right)dt,\ \ \ t \in \mathbb{C}^*. \nonumber
		\end{equation}
		We put \(g(e_j,e_l) = e^{w_j} \delta_{jl} \), where \(w_j: \mathbb{C}^* \rightarrow \mathbb{R} \), then from the reality constraint the \(\{w_j\}_{j=0}^{n} \) satisfy
		\begin{equation}
			w_j + w_{n-j} = 0,\ \ \ j = 0,\cdots, n. \nonumber
		\end{equation}
		From the tt*-equation, the \(\{w_j \}_{j=0}^{n} \) satisfy
		\begin{equation}
			(w_j)_{t \overline{t}} = e^{w_j-w_{j-1} } - e^{w_{j+1} - w_j },\ \ \ j = 0,\cdots,n, \nonumber
		\end{equation}
		where \(w_{-1} = w_n, w_{n+1} = w_0 \). This may be called the tt*-Toda equation.
		In physics, the tt*-Toda equation was introduced by Cecotti and Vafa in \cite{CV1991}. The global solutions of the tt*-Toda equation on \(\mathbb{C}^* \) were investigated by Guest, Its and Lin (see \cite{GIL20151}, \cite{GIL20152}, \cite{GIL2020}). \\
		\qed
	\end{example}  \vspace{3mm}
	In this situation, the homogeneity condition amounts to the radial condition \(w_j = w_j(|t|) \).

	\subsection{The \(({\rm SU}_2)_k \)-fusion ring}
	In this paper, we consider another example of the tt*-equation given by the algebraic structure of the \(({\rm SU}_2 )_k \)-fusion ring. This type of tt*-equation was introduced by Cecotti and Vafa. In \cite{CV1991}, they constructed a solution of the tt*-equation written by using solutions of the tt*-Toda equation of Example 1 with \(n = 1 \) (i.e. the sinh-Gordon equation). In this section, we explain this solution precisely and we describe the global solutions of the tt*-equation constructed from the \(({\rm SU}_2)_k \)-fusion ring. \vspace{3mm}
	
	For \(k \in \mathbb{N} \), the \(({\rm SU}_2)_k \)-fusion ring can be described by the polynomial ring modulo a certain ideal
	\begin{equation}
		R_k = \mathbb{C}[X]/\left(dT_{k+2}/dX \right), \nonumber
	\end{equation}
	where \(T_{k+2} \) is the first kind of Chebyshev polynomial given by
	\begin{equation}
		T_{k+2}(\cos{\theta}) = \cos{((k+2)\theta)}. \nonumber
	\end{equation}
	We denote the equivalence class of a polynomial \(p(X)\) by \([p(X)] \). Next, we define a nondegenerate bilinear form and an endomorphism \(C \) on \(R_k \).  \vspace{3mm}
	
	\begin{definition}
		We define a nondegenerate bilinear form \(<\cdot,\cdot> \) on \(R_k \) by the Grothendieck residue
		\begin{equation}
			<[a],[b]> = {\rm Res}_{T_{k+2} }(ab) = \frac{1}{2 i \pi} \int_{\gamma } \frac{a(X)b(X)}{\frac{dT_{k+2}}{dX} }dX,\ \ \ a,b \in \mathbb{C}[X], \nonumber
		\end{equation}
		where \(\gamma \) is a closed curve containing all roots of \(dT_{k+2}/dX \).
	\end{definition}
	\vspace{3mm}
	
	\begin{definition}
		We define an endomorphism \(C:R_k \rightarrow R_k \) as the map induced by multiplication by \(T_{k+2} \) i.e.
		\begin{equation}
			C([a]) = \left[T_{k+2} \cdot a \right] \in R_k. \nonumber
		\end{equation}
	\end{definition}
	\vspace{3mm}
	
	\begin{remark}
		In physics, the construction of \(<\cdot,\cdot> \) and \(C \) above arise in the Landau-Ginzburg theory. As a generalization of the variation of Hodge structure, \(C \) and \(<\cdot,\cdot> \) are the analogue of the Kodaira-Spencer map and the Poincar\'{e} pairing respectively (see \cite{C1991}). \\
		\qed
	\end{remark}  \vspace{3mm}
	
	Let \(E = \mathbb{C}^* \times R_k \) be the trivial holomorphic vector bundle over \(\mathbb{C}^* \) with the holomorphic structure \(\overline{\partial}_E = \overline{\partial} \), \(\eta \) the nondegenerate holomorphic bilinear form on \(E \) induced by \(<\cdot,\cdot> \) and \(Cdt\ (t \in \mathbb{C}^*) \) an \({\rm End}(E) \)-valued holomorphic 1-form on \(\mathbb{C}^* \). We consider the tt*-equation for \((E,\eta,C) \). We have the following Lemma. \vspace{3mm}
	
	\begin{lem} \label{lem2.1}
		The fusion ring \(R_k \) is a \(k+1 \) dimensional \(\mathbb{C} \)-module with a basis \(([1],[X],\cdots,[X^k] ) \) i.e. \(E \) is a rank \(k+1 \) vector bundle.
	\end{lem}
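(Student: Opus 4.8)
The plan is to reduce the statement to the elementary fact that, for any nonzero polynomial \(p(X) \in \mathbb{C}[X]\) of degree \(d\), the quotient \(\mathbb{C}[X]/(p)\) is a free \(\mathbb{C}\)-module of rank \(d\) with monomial basis \(([1],[X],\dots,[X^{d-1}])\). Applying this with \(p = dT_{k+2}/dX\), everything comes down to verifying that this generator has degree exactly \(k+1\). So the first step is a degree count for the Chebyshev polynomial. Using the defining relation \(T_{k+2}(\cos\theta)=\cos((k+2)\theta)\) together with the recurrence \(T_{n+1}=2XT_n-T_{n-1}\), \(T_0=1\), \(T_1=X\), one shows by induction that \(T_n\) has degree \(n\) with leading coefficient \(2^{n-1}\) for \(n \ge 1\). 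In particular \(T_{k+2}\) has degree \(k+2\) and leading term \(2^{k+1}X^{k+2}\).

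Differentiating, I would then record that
\begin{equation}
\frac{dT_{k+2}}{dX} = (k+2)\,2^{k+1}\,X^{k+1} + (\text{lower order terms}), \nonumber
\end{equation}
so the generator of the ideal has degree exactly \(k+1\) with nonzero leading coefficient \((k+2)2^{k+1}\). This is the only genuinely arithmetic input; the rest is formal.

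With the degree pinned down, the spanning and independence arguments are both instances of the division algorithm. For spanning: given any \([a] \in R_k\), divide \(a(X)\) by \(dT_{k+2}/dX\) to write \(a = q\cdot(dT_{k+2}/dX) + r\) with \(\deg r \le k\); then \([a]=[r]\), so \([r]\) is a \(\mathbb{C}\)-linear combination of \([1],\dots,[X^k]\), and the listed classes span \(R_k\). For linear independence: suppose \(\sum_{j=0}^{k}c_j[X^j]=0\), i.e. the degree-\(\le k\) polynomial \(\sum_j c_j X^j\) lies in the ideal \((dT_{k+2}/dX)\). Any nonzero element of this ideal is a multiple of a degree-\((k+1)\) polynomial and hence has degree at least \(k+1\); since our polynomial has degree at most \(k\), it must be identically zero, forcing all \(c_j=0\). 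Thus \(([1],[X],\dots,[X^k])\) is a basis, \(\dim_{\mathbb{C}} R_k = k+1\), and \(E=\mathbb{C}^*\times R_k\) has rank \(k+1\).

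I do not expect a substantive obstacle here: the argument is the standard structure theory of quotients of \(\mathbb{C}[X]\) by a principal ideal, and the only place where the particular polynomial \(T_{k+2}\) enters is in confirming that its derivative does not drop degree. The one point worth stating carefully is precisely that the leading coefficient \((k+2)2^{k+1}\) is nonzero (which holds for all \(k \in \mathbb{N}\)), guaranteeing \(\deg(dT_{k+2}/dX)=k+1\) rather than something smaller; once this is in place the dimension count is automatic.
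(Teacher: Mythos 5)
Your proposal is correct and follows essentially the same route as the paper: the paper's proof likewise reduces everything to the observation that \(\deg T_{k+2} = k+2\), hence \(\deg(dT_{k+2}/dX) = k+1\), and then invokes the standard structure of \(\mathbb{C}[X]/(p)\) as a free module with monomial basis. You simply spell out the details the paper leaves implicit (the recurrence giving the leading coefficient \((k+2)2^{k+1}\), the division algorithm for spanning, and the degree bound for independence), which is a faithful expansion rather than a different argument.
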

	\begin{proof}
		From the definition, the degree of \(T_{k+2} \) is \(k+2 \) and then the degree of \(dT_{k+2}/dX \) is \(k+1 \). Hence \(([1],[X],\cdots,[X^k] ) \) is a basis of \(R_k \). \\
	\end{proof}  \vspace{3mm}
	
	The following properties are well-known (\cite{CV1991}).  \vspace{3mm}
	
	\begin{lem}\label{lem2.3}
		The roots \(\{X_j \}_{j \in J} \) of \(dT_{k+2}/dX \) are given by
		\begin{equation}
			X_j = \cos{\left(\frac{j}{k+2} \pi \right) },\ \ \ i \in J = \{1,\cdots,k+1 \}. \nonumber
		\end{equation}
		We have
		\begin{equation}
			<[a],[b]> = \sum_{j=1}^{k+1} a(X_j) b(X_j) \left(\frac{d^2T_{k+2}}{dX^2}(X_j) \right)^{-1},\ \ \ a,b \in \mathbb{C}[X]. \nonumber
		\end{equation}
	\end{lem}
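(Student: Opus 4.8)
The plan is to treat the two assertions in turn, establishing the location of the roots first, since the simple-pole structure of $dT_{k+2}/dX$ is exactly what the residue formula will rely on.

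For the roots I would work directly from the defining relation $T_{k+2}(\cos\theta)=\cos((k+2)\theta)$. Writing $X=\cos\theta$ and differentiating both sides with respect to $\theta$ by the chain rule gives
\[
\frac{dT_{k+2}}{dX}(\cos\theta)\cdot(-\sin\theta)=-(k+2)\sin((k+2)\theta),
\]
so that $\frac{dT_{k+2}}{dX}(\cos\theta)=(k+2)\,\sin((k+2)\theta)/\sin\theta$ whenever $\sin\theta\neq 0$. On $\theta\in(0,\pi)$ the factor $\sin\theta$ never vanishes, so the zeros of the derivative in $(-1,1)$ correspond precisely to $\sin((k+2)\theta)=0$, i.e. $\theta=j\pi/(k+2)$ with $j\in\{1,\dots,k+1\}$, yielding the candidates $X_j=\cos(j\pi/(k+2))$.

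Next I would confirm by a degree count that these are all the roots and that each is simple. Since $T_{k+2}$ has degree $k+2$, its derivative has degree $k+1$; and since $\theta\mapsto\cos\theta$ is injective on $(0,\pi)$, the values $X_1,\dots,X_{k+1}$ are pairwise distinct. A degree-$(k+1)$ polynomial with $k+1$ distinct roots has no further roots (in particular no complex ones) and all of its roots are simple. Consequently $\frac{d^2T_{k+2}}{dX^2}(X_j)\neq 0$ for every $j$, which is exactly the condition needed for the right-hand side of the pairing formula to be well defined.

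For the pairing I would then apply the residue theorem to the defining contour integral. As $\gamma$ encircles all the (real, simple) roots of $dT_{k+2}/dX$,
\[
\langle [a],[b]\rangle=\frac{1}{2\pi i}\int_{\gamma}\frac{a(X)b(X)}{\frac{dT_{k+2}}{dX}}\,dX=\sum_{j=1}^{k+1}{\rm Res}_{X=X_j}\frac{a(X)b(X)}{\frac{dT_{k+2}}{dX}},
\]
and because each $X_j$ is a simple zero of the denominator the standard simple-pole formula gives ${\rm Res}_{X=X_j}=a(X_j)b(X_j)/\frac{d^2T_{k+2}}{dX^2}(X_j)$; summing over $j$ produces the stated expression. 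The computation is essentially routine, and the only genuinely load-bearing point is the simplicity of the roots, which is what licenses the simple-pole formula and also rules out hidden complex roots inside $\gamma$; this has already been secured by the degree count, so no further obstacle remains.
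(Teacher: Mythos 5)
Your proposal is correct and follows essentially the same route as the paper: both derive $\frac{dT_{k+2}}{dX}(\cos\theta)=(k+2)\sin((k+2)\theta)/\sin\theta$ to locate the roots, then evaluate the Grothendieck residue by the residue theorem using the simple-pole formula at each $X_j$. Your explicit degree count and distinctness argument just make precise the simplicity of the roots, which the paper handles implicitly via the factorization $dT_{k+2}/dX = c\prod_{r}(X-X_r)$.
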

	\begin{proof}
		From the definition of \(T_{k+2} \) with \(X = \cos{\theta} \), we have
		\begin{equation}
			\frac{dT_{k+2}}{dX}(\cos{\theta} ) = (k+2)\frac{\sin{((k+2)\theta)}}{\sin{\theta} }. \nonumber
		\end{equation}
		Thus, roots of \(dT_{k+2}/dX \) are given by \(\cos{(j \pi/(k+2)) } \) for \(j = 1,\cdots,k+1\). Since \(dT_{k+2}/dX = 2^{k+1} \prod_{r=1}^{k+1} (X-X_r) \), we have
		\begin{equation}
			\frac{d^2T_{k+2}}{dX^2}(X_j) = 2^{k+1} \prod_{r \in J \backslash \{j \}} (X_j-X_r). \nonumber
		\end{equation}
		From the residue Theorem, we obtain
		\begin{align}
			<[a],[b]> &= \sum_{j=1}^{k+1} \lim_{X \rightarrow X_j} (X-X_j) \frac{a(X)b(X)}{\frac{dT_{k+2}}{dX} } \nonumber \\
			&= \sum_{j=1}^{k+1} a(X_j)b(X_j) \left(\frac{d^2T_{k+2}}{dX^2}(X_j) \right)^{-1}. \nonumber
		\end{align}
	\end{proof}
	\vspace{3mm}
	
	\begin{lem} \label{lem2.2}
		Let \(a,b \in \mathbb{C}[X] \) and \(\{X_j \}_{j \in J} \) be the roots of \(dT_{k+2}/dX \), then
		\begin{equation}
			[a] = [b]\ \ \ \Leftrightarrow \ \ \ a(X_j) = b(X_j)\ \ {\rm for\ all}\ j \in J = \{1,\cdots,k+1 \}. \nonumber
		\end{equation}
	\end{lem}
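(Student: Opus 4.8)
The plan is to unwind the definition of the quotient ring and reduce the whole statement to a divisibility assertion in \(\mathbb{C}[X]\). By definition of \(R_k = \mathbb{C}[X]/\left(dT_{k+2}/dX\right)\), the equality \([a]=[b]\) holds if and only if \(a-b\) lies in the ideal generated by \(dT_{k+2}/dX\), that is, if and only if \(dT_{k+2}/dX\) divides \(a-b\) in \(\mathbb{C}[X]\). So the lemma becomes the equivalence: \(dT_{k+2}/dX \mid (a-b)\) precisely when \(a-b\) vanishes at every \(X_j\).

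For the forward implication I would simply evaluate. If \(a-b = q\cdot\left(dT_{k+2}/dX\right)\) for some \(q\in\mathbb{C}[X]\), then substituting \(X=X_j\) and using \(\left(dT_{k+2}/dX\right)(X_j)=0\) gives \(a(X_j)-b(X_j)=0\) for each \(j\in J\), which is the desired conclusion. This direction is immediate and uses nothing beyond the definition of the ideal.

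The converse is where the only real content lies. Suppose \(a(X_j)=b(X_j)\) for all \(j\in J\), so that \(a-b\) vanishes at all \(k+1\) points \(X_1,\dots,X_{k+1}\). The key input is Lemma \ref{lem2.3}, which exhibits these as \(k+1\) \emph{distinct} numbers \(X_j=\cos\left(j\pi/(k+2)\right)\) together with the factorization \(dT_{k+2}/dX = 2^{k+1}\prod_{r=1}^{k+1}(X-X_r)\) into distinct linear factors; in particular all roots are simple. Since the \(X_j\) are distinct, each factor \((X-X_j)\) divides \(a-b\), and being pairwise coprime their product \(\prod_{j\in J}(X-X_j)\) divides \(a-b\). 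Up to the nonzero constant \(2^{k+1}\) this product is exactly \(dT_{k+2}/dX\), so \(dT_{k+2}/dX \mid (a-b)\), giving \([a]=[b]\).

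The main obstacle — indeed the only nontrivial point — is ensuring that vanishing at the \(k+1\) roots forces divisibility by the full derivative polynomial; this step would fail if \(dT_{k+2}/dX\) had a repeated root, so the argument rests essentially on the simplicity and distinctness of the roots established in Lemma \ref{lem2.3}.
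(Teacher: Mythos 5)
Your proof is correct and takes essentially the same route as the paper's: the forward direction by evaluating at the roots, and the converse by observing that vanishing at the \(k+1\) distinct roots forces divisibility by \(\prod_{j \in J}(X-X_j)\), which agrees with \(dT_{k+2}/dX\) up to a constant. You are in fact slightly more careful than the paper, which silently absorbs the leading constant \(2^{k+1}\) and leaves the distinctness of the roots (from Lemma \ref{lem2.3}) implicit.
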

	\begin{proof}
		If \([a] = [b] \), then \(a-b \in dT_{k+2}/dX \cdot \mathbb{C}[X] \). Thus, we obtain \(a(X_j) = b(X_j) \) for all \(j \in J \). Conversely, suppose that \(a(X_j) = b(X_j) \) for all \(j \in J \). 
		Since \(a-b = A(X)\prod_{j \in J}(X-X_j) = A(X) dT_{k+2}/dX \) for some \(A \in \mathbb{C}[X] \), we obtain \([a] = [b] \in R_k \). \\
	\end{proof}  \vspace{3mm}
	
	In this section, we seek a Hermitian metric \(g \) satisfying the tt*-equation for \((E,\eta,C) \) constructed above. In general, it is very difficult to solve the tt*-equation. If \(g \) is diagonal with respect to some basis, the calculation become more feasible. Thus, it is important to choose a suitable basis. \vspace{3mm}
	
	If we use a basis \(([1],[X],\cdots,[X^k]) \) of \(R_k \), then the representation matrix of \(\eta \) is not simple and the representation matrix of \(g \) is not diagonal. For example if we consider the case \(k = 2 \) then
	\begin{equation}
		(<[X^j],[X^l]>)_{0 \le j,l \le 2} = \frac{1}{8} \left(\begin{array}{ccc}
			0 & 0 & 2 \\
			0 & 2 & 0 \\
			2 & 0 & 1
		\end{array} \right). \nonumber
	\end{equation}
	Suppose that \(g \) is diagonal with respect to \(([1],[X],[X^2]) \), then \(g, \eta \) do not satisfy the reality constraint. Thus, we can not assume that \(g \) is diagonal.  \vspace{3mm}
	
	\begin{remark}
		In \cite{CV1991}, Cecotti and Vafa used the ``point basis'' \(\{[L_j] \}_{j = 1}^{k+1} \) defined by \(L_j(X_l) = \delta_{jl} \). Then the matrix of \(C \) with respect to \(\{[L_j] \}_{j=1}^{k+1} \) is given by
		\begin{equation}
			C([L_1],\cdots,[L_{k+1}] ) = ([L_1],\cdots,[L_{k+1}] ) \left(\begin{array}{ccc}
				-1 & & \\
				& \ddots & \\
				& & (-1)^{k+1}
			\end{array} \right), \nonumber
		\end{equation}
		and we have
		\begin{equation}
			<[L_j],[L_l]> = (-1)^{j+1} \frac{\sin{\left(\frac{j}{k+2}\pi \right)}}{k+2}  \frac{\sin{\left(\frac{l}{k+2}\pi \right)} }{k+2 } \delta_{jl}. \nonumber
		\end{equation}
		If we assume that \(g \) is diagonal with respect to \(\{L_j \}_{j=1}^{k+1} \), then  the reality constraint implies \(g \) is constant i.e. we obtain only a trivial solution of the tt*-equation. Thus, a non-trivial solution \(g \) is not diagonal with respect to \(\{L_j \}_{j=1}^{k+1} \).
		\qed
	\end{remark}  \vspace{3mm}
	
	In \cite{CV1991}, Cecotti and Vafa gave a ``physical argument'' for solving the tt*-equation constructed from the \(({\rm SU}_2)_k \)-fusion ring. In this paper, we choose a special basis of \(R_k \) and we simplify the construction of \(g \).  \vspace{3mm}
	
	\begin{lem}
		Let \(U_n\ (n \in \mathbb{Z}_{\ge 0}) \) be the second kind of Chebyshev polynomial defined by
		\begin{equation}
			U_n(\cos{Y}) = \frac{\sin{((n+1)Y )}}{\sin{Y}}. \nonumber
		\end{equation}
		Then \(([U_0],\cdots,[U_k] ) \) is a basis of \(R_k \).  
	\end{lem}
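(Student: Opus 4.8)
The plan is to reduce the statement to an invertible (triangular) change of basis from the basis $([1],[X],\dots,[X^k])$ supplied by Lemma \ref{lem2.1}. The essential observation is that $U_n$ is a polynomial of degree exactly $n$. Indeed, from the defining identity together with $\sin((n+2)Y)+\sin(nY)=2\cos Y\,\sin((n+1)Y)$ one obtains the standard recurrence $U_{n+1}=2XU_n-U_{n-1}$ with $U_0=1$ and $U_1=2X$, and a one-line induction then gives $U_n(X)=2^nX^n+(\text{terms of degree}<n)$.

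First I would record the consequence that $U_0,\dots,U_k$ all lie in the space $V:=\mathbb{C}[X]_{\le k}$ of polynomials of degree at most $k$, and that the linear endomorphism of $V$ sending $X^n\mapsto U_n$ is represented, in the ordered basis $(1,X,\dots,X^k)$, by a triangular matrix whose diagonal entries are the leading coefficients $2^0,2^1,\dots,2^k$. Its determinant $2^{k(k+1)/2}$ is nonzero, so $(U_0,\dots,U_k)$ is itself a basis of $V$.

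Next I would transfer this conclusion to $R_k$. The natural projection $\pi:V\to R_k$, $p\mapsto[p]$, is a linear isomorphism: by Lemma \ref{lem2.1} its image contains the spanning set $[1],[X],\dots,[X^k]$, so $\pi$ is surjective, and since $\dim V=k+1=\dim R_k$ it is also injective. (Concretely, injectivity holds because every nonzero element of the ideal $(dT_{k+2}/dX)$ has degree at least $k+1$, as $\deg(dT_{k+2}/dX)=k+1$, so $V$ meets the ideal only in $0$.) An isomorphism carries a basis to a basis, hence $([U_0],\dots,[U_k])=(\pi(U_0),\dots,\pi(U_k))$ is a basis of $R_k$.

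I do not anticipate a genuine obstacle here; the only point requiring care is the degree computation for $U_n$. It may also be worth remarking that this construction is consistent with the earlier formula $dT_{k+2}/dX(\cos\theta)=(k+2)\sin((k+2)\theta)/\sin\theta$, i.e. $dT_{k+2}/dX=(k+2)U_{k+1}$, which shows $[U_{k+1}]=0$ in $R_k$ and explains why the basis stops precisely at $U_k$.
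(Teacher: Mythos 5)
Your proposal is correct and follows essentially the same route as the paper: the paper's proof consists precisely of the observation that \(U_n(X) = 2^n X^n + O(X^{n-1})\), from which the basis property follows by the triangular change of basis from \(([1],[X],\dots,[X^k])\) that you spell out. Your write-up merely makes explicit the details (the recurrence \(U_{n+1}=2XU_n-U_{n-1}\) and the transfer to \(R_k\) via Lemma \ref{lem2.1}) that the paper leaves implicit.
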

	\begin{proof}
		This follows immediately from the fact that \(U_n(X) = 2^n X^n + O(X^{n-1}) \). \\
	\end{proof}  \vspace{3mm}
	
	Regarding \(\{[U_j] \}_{j=0}^{k} \), we obtain the following two Lemmas.  \vspace{3mm}
	
	\begin{lem}\label{lem2.4}
		For \(k \in \mathbb{N} \), then we have
		\begin{equation}
			T_{k+2} \cdot [U_l] = - [U_{k-l}],\ \ \ l = 0,\cdots,k. \nonumber
		\end{equation}
	\end{lem}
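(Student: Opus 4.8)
The plan is to prove the identity by reducing it, via Lemma \ref{lem2.2}, to an equality of polynomial values at the roots of \(dT_{k+2}/dX\). By that lemma, two classes in \(R_k\) coincide if and only if the representing polynomials agree at all the roots \(\{X_j\}_{j\in J}\), so it suffices to establish
\begin{equation}
	T_{k+2}(X_j)\,U_l(X_j) = -\,U_{k-l}(X_j), \qquad j \in J = \{1,\dots,k+1\}. \nonumber
\end{equation}
First I would substitute the explicit roots supplied by Lemma \ref{lem2.3}, writing \(X_j = \cos Y_j\) with \(Y_j = \tfrac{j}{k+2}\pi\). Since \(0 < Y_j < \pi\) we have \(\sin Y_j \neq 0\), so all the expressions \(U_n(X_j) = \sin((n+1)Y_j)/\sin Y_j\) are well defined.

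Next I would evaluate each factor. From \(T_{k+2}(\cos Y) = \cos((k+2)Y)\) and \((k+2)Y_j = j\pi\) we get \(T_{k+2}(X_j) = \cos(j\pi) = (-1)^j\), so the left-hand side equals \((-1)^j \sin((l+1)Y_j)/\sin Y_j\). The key step is then a reflection identity for the sine: writing \((k-l+1)Y_j = (k+2)Y_j - (l+1)Y_j = j\pi - (l+1)Y_j\) and expanding, the term carrying \(\sin(j\pi)\) vanishes and one is left with \(\sin((k-l+1)Y_j) = -(-1)^j \sin((l+1)Y_j)\). Dividing by \(\sin Y_j\) and inserting the overall minus sign shows \(-U_{k-l}(X_j) = (-1)^j \sin((l+1)Y_j)/\sin Y_j\), which matches the left-hand side. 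Applying Lemma \ref{lem2.2} then gives \(T_{k+2}\cdot[U_l] = -[U_{k-l}]\).

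The only delicate point, and the one I would check carefully, is the sign bookkeeping in the factor \((-1)^j\): it appears once from \(T_{k+2}(X_j) = (-1)^j\) and once from the reflection \(\sin(j\pi - \theta)\), and these must combine correctly with the minus sign on the right so that the \((-1)^j\) survives on both sides rather than cancelling. There is no analytic obstacle here; the content is entirely this elementary trigonometric cancellation together with the fact that \(X_j\) are exactly the common roots of \(dT_{k+2}/dX\) and of the relevant \(U_n\).

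As an alternative that bypasses the roots entirely, I note that the product-to-sum formula yields the polynomial identity \(T_{k+2}U_l = \tfrac12\bigl(U_{k+l+2} - U_{k-l}\bigr)\); since \(dT_{k+2}/dX = (k+2)U_{k+1}\) one then reduces the high-index term modulo \(U_{k+1}\) using the reflection \(U_{(k+1)+m} \equiv -U_{(k+1)-m}\) with \(m = l+1\), which gives \(U_{k+l+2} \equiv -U_{k-l}\) and hence the claim. I expect the evaluation argument above to be shorter and less error-prone, so I would present that one as the main proof and mention this reduction only as a remark.
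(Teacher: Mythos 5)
Your main argument is correct and takes essentially the same route as the paper's own proof: both evaluate \(T_{k+2}U_l\) at the roots \(X_j = \cos\bigl(\tfrac{j}{k+2}\pi\bigr)\), use \(T_{k+2}(X_j) = (-1)^j\) together with the reflection \(\sin(j\pi - \theta) = -(-1)^j\sin\theta\) (the paper writes the resulting sign as \((-1)^{2j+1}\)), and conclude by Lemma \ref{lem2.2}. Your sign bookkeeping checks out, so the proof stands as written; the alternative reduction modulo \(U_{k+1}\) is a nice extra but not needed.
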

	\begin{proof}
		From
		\begin{align}
			(T_{k+2} U_l)\left(\cos{\left(\frac{j}{k+2} \pi \right) } \right) &= \cos{(j \pi) } \frac{\sin{\left(\frac{j}{k+2}\pi (l+1) \right) } }{\sin{(\frac{j}{k+2}\pi )} } \nonumber \\
			&= (-1)^{2j+1} \frac{\sin{\left(\frac{j}{k+2}\pi (k-l+1) \right) } }{\sin{(\frac{j}{k+2}\pi )} } \nonumber \\
			&= - U_{k-l}\left(\cos{\left(\frac{j}{k+2} \pi \right) } \right),\ \ \ j = 1,\cdots,k+1. \nonumber
		\end{align}
		and Lemma \ref{lem2.2}, we obtain \([T_{k+2}] \cdot [U_l] = - [U_{k-l}] \). \\
	\end{proof}  \vspace{3mm}
	
	\begin{lem}\label{lem2.6}
		We have
		\begin{equation}
			<[U_j],[U_l]> = \frac{1}{2(k+2)} \delta_{j,k-l}. \nonumber
		\end{equation}
	\end{lem}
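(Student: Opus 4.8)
The plan is to compute the pairing directly from the residue formula of Lemma \ref{lem2.3}, parametrizing the roots as $X_r=\cos\theta_r$ with $\theta_r=\tfrac{r}{k+2}\pi$ for $r=1,\dots,k+1$, and to reduce everything to a discrete orthogonality relation for sines. Writing $N=k+2$, the formula of Lemma \ref{lem2.3} reads $\langle[U_j],[U_l]\rangle=\sum_{r=1}^{N-1}U_j(X_r)U_l(X_r)\bigl(\tfrac{d^2T_{N}}{dX^2}(X_r)\bigr)^{-1}$, and since $U_n(\cos\theta)=\sin((n+1)\theta)/\sin\theta$ the factors $U_j(X_r)U_l(X_r)$ are already trigonometric. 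So the first task is to evaluate the second derivative of $T_N$ at the roots.

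For that I would differentiate the first-derivative identity $\tfrac{dT_N}{dX}(\cos\theta)=N\,\tfrac{\sin(N\theta)}{\sin\theta}$ established in the proof of Lemma \ref{lem2.3}, using $\tfrac{d}{dX}=-\tfrac{1}{\sin\theta}\tfrac{d}{d\theta}$. The point is that at a root we have $N\theta_r=r\pi$, so $\sin(N\theta_r)=0$ and $\cos(N\theta_r)=(-1)^r$; the quotient-rule numerator collapses and one finds $\tfrac{d^2T_N}{dX^2}(X_r)=-N^2(-1)^r/\sin^2\theta_r$. Substituting this together with $U_j(X_r)U_l(X_r)=\sin((j+1)\theta_r)\sin((l+1)\theta_r)/\sin^2\theta_r$ into the residue formula, the $\sin^2\theta_r$ cancels and leaves $\langle[U_j],[U_l]\rangle=-\tfrac{1}{N^2}\sum_{r=1}^{N-1}(-1)^r\sin((j+1)\theta_r)\sin((l+1)\theta_r)$.

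The decisive step is handling the alternating factor $(-1)^r$. I would rewrite it using $(-1)^r\sin(m\theta_r)=-\sin((N-m)\theta_r)$, which holds because $(N-m)\theta_r=r\pi-m\theta_r$ and $\sin(r\pi)=0$, $\cos(r\pi)=(-1)^r$. With $m=j+1$ this turns the sum into the plain sine sum $\sum_{r=1}^{N-1}\sin((N-j-1)\theta_r)\sin((l+1)\theta_r)$, to which the discrete sine orthogonality $\sum_{r=1}^{N-1}\sin(\tfrac{mr\pi}{N})\sin(\tfrac{nr\pi}{N})=\tfrac{N}{2}\delta_{mn}$ (valid for $1\le m,n\le N-1$) applies, since $N-j-1$ and $l+1$ both lie in $\{1,\dots,N-1\}$. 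The resulting Kronecker delta is nonzero exactly when $N-j-1=l+1$, i.e. $j=k-l$, and assembling the constants gives $\tfrac{1}{N^2}\cdot\tfrac{N}{2}=\tfrac{1}{2(k+2)}$, as claimed.

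The main obstacle is the orthogonality relation in the last step; I would either quote it as a standard fact about the discrete sine transform or prove it in one line by writing the sines as exponentials and summing the two geometric series, which produces the delta. The only other place demanding care is the sign bookkeeping in the second-derivative computation, where the factor $(-1)^r$ and the minus sign coming from $\tfrac{d}{dX}=-\tfrac{1}{\sin\theta}\tfrac{d}{d\theta}$ must be tracked so that the two minus signs ultimately combine to give a positive coefficient.
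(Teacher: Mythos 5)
Your proposal is correct and takes essentially the same route as the paper: both reduce the pairing via Lemma \ref{lem2.3} to a trigonometric sum of the form $\frac{1}{(k+2)^2}\sum_{r=1}^{k+1}\sin\left(\tfrac{(j+1)r\pi}{k+2}\right)\sin\left(\tfrac{(k-l+1)r\pi}{k+2}\right)$ (your $(-1)^r$-absorption identity is precisely how the paper's index $k-l+1$ arises), and both then evaluate that sum by discrete sine orthogonality established through the same geometric-series computation. If anything, your write-up makes explicit the second-derivative evaluation and the sign bookkeeping that the paper's proof elides.
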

	\begin{proof}
		From Lemma \ref{lem2.3},
		we have
		\begin{align}
			<[U_j],[U_l]> 
			&= \frac{1}{(k+2)^2} \sum_{r=1}^{k+1} \sin{\left(\frac{j+1}{k+2} r\pi \right)} \sin{\left(\frac{k-l+1}{k+2} r\pi \right)}. \nonumber
		\end{align}
		Since
		\begin{align}
			&\sin{\left(\frac{j+1}{k+2} r\pi \right)} \sin{\left(\frac{k-l+1}{k+2} r\pi \right)} \nonumber \\
			&\hspace{2cm} + \sin{\left(\frac{j+1}{k+2} (k+2-r)\pi \right)} \sin{\left(\frac{k-l+1}{k+2} (k+2-r)\pi \right)} \nonumber \\
			&= \left\{1+(-1)^{j+k-l } \right\} \sin{\left(\frac{j+1}{k+2} r\pi \right)} \sin{\left(\frac{k-l+1}{k+2} r\pi \right)}, \nonumber
		\end{align}
		for \(r = 1,\cdots,[r/2] \) and
		\begin{equation}
			\sum_{r=1}^{m} e^{i\theta r} = \frac{e^{i(m+1)\theta}-e^{i\theta} }{e^{i\theta} - 1},\ \ \ m \in \mathbb{N},\ e^{i\theta} \in S^1 \backslash \{1 \}, \nonumber
		\end{equation}
		we obtain
		\begin{equation}
			<[U_j],[U_l]> = \frac{1}{2(k+2)} \delta_{j,k-l}. \nonumber
		\end{equation}
	\end{proof}  \vspace{3mm}
	
	Combining Lemma \ref{lem2.4} and \ref{lem2.6}, we obtain
	\begin{equation}
		C([U_0],\cdots,[U_k] ) = ([U_0],\cdots,[U_k] ) \left(\begin{array}{ccc}
			& & -1 \\
			& \iddots & \\
			-1 & &
		\end{array} \right). \nonumber
	\end{equation}
	and \(<[U_j],[U_l]> = \frac{1}{2(k+2)} \delta_{j,k-l} \). Hence the matrices of \(C \) and \(\eta \) with respect to \(([U_0],\cdots,[U_k] ) \) are very simple. \vspace{3mm}
	
	We define a holomorphic frame of \(E \) by
	\begin{equation}
		\tau_j : \mathbb{C}^* \rightarrow E : t \mapsto (t,\sqrt{2(k+2)} [U_j]),\ \ \ j = 0, \cdots,k. \nonumber
	\end{equation}
	In this section, we solve the tt*-equation for \((E,\eta,C) \) under the assumption \(g \) is diagonal with respect to \((\tau_0,\cdots,\tau_k) \).  \vspace{3mm}
	
	\begin{remark}\label{rem2.3}
		Under the assumption above, \(g \) satisfies \(g\left((-1)^j \tau_j, (-1)^l \tau_l \right) = g(\tau_j,\tau_l) \) for any \(j,l = 0\cdots,k \).
		This symmetry is called the ``\(\mathbb{Z}_2 \)-symmetry'' and induced by the symmetries \(T_{k+2}(-X) = (-1)^{k+2} T_{k+2}(X) \) and \(U_j(-X) = (-1)^j U_j(-X) \) of the Chebyshev polynomials in physics \cite{CV1991}, \cite{CV19932}, \cite{CV1993}. In this paper, we assume the ``\(\mathbb{Z}_{k+1} \)-symmetry'', the generalization of \(\mathbb{Z}_2 \)-symmetry, i.e. \(g(e^{\sqrt{-1} \frac{2\pi}{k+1}j} \tau_j,e^{\sqrt{-1} \frac{2\pi}{k+1}l } \tau_l) = g(\tau_j,\tau_l ) \) for any \(j,l=0,\cdots,k \). \\
		\qed
	\end{remark}  \vspace{3mm}
	
	We obtain the following Proposition.  \vspace{3mm}
	
	\begin{proposition} \label{prop2.1}
		Let \(g \) be a Hermitian metric on \(E \) such that \(g \) is diagonal with respect to \((\tau_0,\cdots,\tau_k) \). \\
		Then \(g \) satisfies the tt*-equation if \(w_j = \log{g(\tau_j,\tau_j)} \ (j=0,\cdots,k) \) are solutions of 
		\begin{equation}
			(w_j)_{t\overline{t}} = e^{w_j-w_{k-j}} - e^{w_{k-j}-w_j}, \nonumber
		\end{equation}
		such that \(w_j = -w_{k-j} \).
	\end{proposition}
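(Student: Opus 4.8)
The plan is to carry out the entire computation in the holomorphic frame $(\tau_0,\dots,\tau_k)$, in which the matrices of $\eta$ and $C$ are as simple as possible, and then to read off the flatness of $\nabla^{\lambda}$ as an identity of Laurent polynomials in $\lambda$. By Lemma \ref{lem2.6} together with the normalization $\tau_j=\sqrt{2(k+2)}\,[U_j]$ one has $\eta(\tau_j,\tau_l)=\delta_{j,k-l}$, so the matrix of $\eta$ is the anti-diagonal identity, and by Lemma \ref{lem2.4} one has $C\tau_j=-\tau_{k-j}$, so the matrix of $C$ is minus the anti-diagonal identity. From these, $\eta(C\tau_j,\tau_l)=-\delta_{jl}=\eta(\tau_j,C\tau_l)$, which checks that $C$ is $\eta$-self-adjoint. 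Writing $g(\tau_j,\tau_j)=e^{w_j}$ and solving $g(a,b)=\eta(\kappa(a),b)$ for the conjugate-linear map $\kappa$ gives $\kappa(\tau_j)=e^{w_j}\tau_{k-j}$; since each $w_j$ is real, $\kappa^2(\tau_j)=e^{w_j+w_{k-j}}\tau_j$, so the reality constraint $\kappa^2=\mathrm{Id}_E$ is exactly the hypothesis $w_j=-w_{k-j}$. Thus the assumptions of Definition \ref{def2.1} hold precisely under the stated symmetry, and the tt*-equation is well-posed for such $g$.

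Next I would compute the two $g$-dependent terms of $\nabla^{\lambda}$. Since the frame is holomorphic and $g$ has diagonal matrix $h=\mathrm{diag}(e^{w_0},\dots,e^{w_k})$, the Chern connection $\partial_E^g$ has connection matrix $h^{-1}\partial h=\mathrm{diag}((w_0)_t,\dots,(w_k)_t)\,dt$, while $\overline{\partial}_E$ contributes no connection matrix. Solving $g(C\,\cdot,\cdot)=g(\cdot,C^{\dagger_g}\,\cdot)$ (equivalently, conjugating the real matrix of $C$ by $h$) yields $C^{\dagger_g}\tau_l=-e^{w_l-w_{k-l}}\tau_{k-l}$, again an anti-diagonal matrix. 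The full connection matrix in this frame is therefore $\omega=\theta\,dt+\lambda^{-1}\widehat{C}\,dt+\lambda\,\widehat{C^{\dagger_g}}\,d\overline{t}$, where $\theta=\mathrm{diag}((w_j)_t)$ and $\widehat{C},\widehat{C^{\dagger_g}}$ denote the matrices just found.

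The flatness of $\nabla^{\lambda}$ is the vanishing of the curvature $d\omega+\omega\wedge\omega$. Setting $A=\theta+\lambda^{-1}\widehat{C}$ and $B=\lambda\,\widehat{C^{\dagger_g}}$, this is $\partial_t B-\partial_{\overline{t}}A+[A,B]=0$. Because $\widehat{C}$ is constant in $t$, this expression is a Laurent polynomial in $\lambda$ involving only the powers $\lambda^0$ and $\lambda^1$; since $S^1$ is infinite, each coefficient must vanish. The coefficient of $\lambda^1$ is $\partial_t\widehat{C^{\dagger_g}}+[\theta,\widehat{C^{\dagger_g}}]$, and I would check that it vanishes identically: on the single nonzero $(k-l,l)$-entry, the $t$-derivative of $-e^{w_l-w_{k-l}}$ is cancelled exactly by the factor $(w_{k-l})_t-(w_l)_t$ produced by the diagonal $\theta$. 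The coefficient of $\lambda^0$ is $-\partial_{\overline{t}}\theta+[\widehat{C},\widehat{C^{\dagger_g}}]$; multiplying the two anti-diagonal matrices gives $\widehat{C}\,\widehat{C^{\dagger_g}}=\mathrm{diag}(e^{w_j-w_{k-j}})$ and $\widehat{C^{\dagger_g}}\,\widehat{C}=\mathrm{diag}(e^{w_{k-j}-w_j})$, so $[\widehat{C},\widehat{C^{\dagger_g}}]=\mathrm{diag}(e^{w_j-w_{k-j}}-e^{w_{k-j}-w_j})$. Since $\partial_{\overline{t}}\theta=\mathrm{diag}((w_j)_{t\overline{t}})$, the vanishing of this coefficient is exactly the stated system $(w_j)_{t\overline{t}}=e^{w_j-w_{k-j}}-e^{w_{k-j}-w_j}$.

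The conceptual content is light; the part needing care is purely bookkeeping, namely pinning down the exponents in $C^{\dagger_g}$ (on which the cancellation in the $\lambda^1$-coefficient depends) and verifying that the product of two anti-diagonal matrices is the claimed diagonal commutator. As a consistency check, once $w_{k-j}=-w_j$ is imposed the system decouples into the radial sinh-Gordon equations $(w_j)_{t\overline{t}}=2\sinh(2w_j)$ foreshadowed in the introduction, which is the expected link to the solutions of McCoy--Tracy--Wu.
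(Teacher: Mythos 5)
Your proposal is correct and follows essentially the same route as the paper: compute the connection matrix of $\nabla^{\lambda}$ in the holomorphic frame $(\tau_0,\dots,\tau_k)$ (Chern term $\mathrm{diag}((w_j)_t)$, anti-diagonal $\lambda^{-1}\widehat{C}$ and $\lambda\,\widehat{C^{\dagger_g}}$) and read off the PDE from the zero-curvature equation, which the paper compresses into ``the flatness condition gives the stated result.'' Your explicit verification of the $\lambda$-coefficients, of $\kappa(\tau_j)=e^{w_j}\tau_{k-j}$, and of the equivalence of the reality constraint with $w_j=-w_{k-j}$ is just a fuller write-up of the same computation.
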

	\begin{proof}
		We have
		\begin{align}
			&\nabla^{\lambda} (\tau_0,\cdots,\tau_k) \nonumber \\
			&\  = (\tau_0,\cdots,\tau_k ) \left\{
			\left(\begin{array}{ccc}
				(w_0)_t & & \\
				& \ddots & \\
				& & (w_k)_t
			\end{array} \right)dt + \left(\begin{array}{ccc}
				& & -\lambda^{-1} \\
				& \iddots & \\
				-\lambda^{-1} & &
			\end{array} \right)dt \right. \nonumber \\
			&\left. \hspace{5cm} + \left(\begin{array}{ccc}
				& & -\lambda e^{w_k - w_0 } \\
				& \iddots & \\
				-\lambda e^{w_0 - w_k } & & 
			\end{array} \right)d\overline{t}
			\right\}. \nonumber
		\end{align}
		The flatness condition gives the stated result. \\
	\end{proof}  \vspace{3mm}
	
	Thus, a solution of the tt*-equation constructed from the \(({\rm SU}_2)_k \)-fusion ring corresponds to \(k+1 \)-solutions of the sinh-Gordon equation. \\
	
	Let \(w_j:\mathbb{C}^* \rightarrow \mathbb{R}\ (j=0,\cdots,k) \) be solutions of
	\begin{equation}
		(w_j )_{t \overline{t} } = e^{w_j-w_{k-j} } - e^{w_{k-j} - w_j }, \tag{A} \label{A}
	\end{equation}
	with the conditions
	\begin{enumerate}
		\item [(1)] \(w_j = w_j(|t|) \) (radial condition), \vspace{1mm}
		\item [(2)] \(w_j + w_{k-j} = 0 \) (anti-symmetry condition).
	\end{enumerate}
	From the condition (2), the equation (\ref{A}) is equivalent to the sinh-Gordon equation
	\begin{equation}
		(w_j )_{t \overline{t} } = e^{2w_j}  - e^{- 2w_j }. \nonumber
	\end{equation}
	
	From Proposition \ref{prop2.1}, the tt*-equation constructed from the \(({\rm SU}_2)_k \)-fusion ring is equivalent to \((k+1)/2 \) (if \(k+1 \) is even) or \(k/2 \) (if \(k+1 \) is odd) copies of the sinh-Gordon equation. \\
	
	It is well-known (see \cite{FIKN2006}, \cite{GIL20151}, \cite{MTW1977}) that the global solutions of the sinh-Gordon equation are characterized by their asymptotic behaviour at \(t = 0 \)
	\begin{equation}
		w_j \sim m_j \log{|t|}\ \ \ {\rm as}\ \ t \rightarrow 0, \ \ \ -1 \le m_j \le 1. \nonumber
	\end{equation}
	Thus, we obtain the following relation between solutions of (A) and a subset of \(\mathbb{R}^{k+1} \):  \vspace{3mm}
	
	\begin{proposition}[\cite{GIL20151}]
		There is a one-to-one correspondence between
		\begin{enumerate}
			\item [(a)] solutions of (A), \vspace{1mm}
			
			\item [(b)] \(k+1 \)-tuples \((m_0,\cdots,m_k) \in \mathbb{R}^{k+1} \) with \(-1 \le m_j \le 1,\ m_j+m_{k-j}=0 \).
		\end{enumerate}
		We call \((m_0,\cdots,m_k) \) asymptotic data.
	\end{proposition}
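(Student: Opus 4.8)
The plan is to reduce the coupled system (A) to a disjoint collection of radial sinh-Gordon equations and then invoke their classical classification. First I would impose the anti-symmetry condition (2), $w_{k-j} = -w_j$, directly in (A). Then $w_j - w_{k-j} = 2w_j$ and $w_{k-j} - w_j = -2w_j$, so (A) becomes $(w_j)_{t\overline{t}} = e^{2w_j} - e^{-2w_j} = 2\sinh(2w_j)$, and together with the radial condition (1) this is precisely the radial sinh-Gordon equation for each $j$. One checks that the equation for the index $k-j$ is then automatically the negative of that for $j$, so the ansatz $w_{k-j} = -w_j$ is consistent with the full system rather than an extra assumption on it.

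Next I would organize the index set $\{0,\dots,k\}$ under the involution $j \mapsto k-j$. When $k$ is odd this involution has no fixed point and the indices split into $(k+1)/2$ two-element orbits; when $k$ is even there is the single fixed point $j = k/2$, on which anti-symmetry forces $w_{k/2} \equiv 0$, and the remaining indices form $k/2$ two-element orbits. Thus giving a solution of (A) amounts to choosing one radial sinh-Gordon solution for each two-element orbit (the partner being determined by $w_{k-j} = -w_j$), and nothing on the fixed point.

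The essential input is then the result of McCoy--Tracy--Wu (\cite{MTW1977}; see also \cite{FIKN2006}, \cite{GIL20151}): the global solutions of the radial sinh-Gordon equation on $\mathbb{C}^*$ are in bijection with $m \in [-1,1]$ via the asymptotics $w \sim m\log|t|$ as $t \to 0$. Applying this to each orbit representative produces, for any solution of (A), a well-defined exponent $m_j \in [-1,1]$; the relation $w_{k-j} = -w_j$ yields $m_{k-j} = -m_j$, and the fixed-point solution $w_{k/2} \equiv 0$ yields $m_{k/2} = 0$. Both are compatible with (b), so the assignment $w \mapsto (m_0,\dots,m_k)$ lands in the set described there.

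Conversely, from a tuple $(m_0,\dots,m_k)$ satisfying $-1 \le m_j \le 1$ and $m_j + m_{k-j} = 0$, I would reconstruct the solution orbit by orbit: on each two-element orbit take the unique sinh-Gordon solution with exponent $m_j$ and set its partner to its negative, and set $w_{k/2} = 0$ at a fixed point. The constraints in (b) are exactly what make this reconstruction well-defined — $m_j \in [-1,1]$ so that a solution exists, and $m_j + m_{k-j} = 0$ so that the two descriptions of a single orbit agree — and the two maps are visibly inverse to one another. The only genuinely analytic content is the McCoy--Tracy--Wu classification of a \emph{single} sinh-Gordon equation; everything else is the combinatorics of the involution $j \mapsto k-j$ together with the verification that the sign constraint matches. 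That bookkeeping, particularly the fixed-point case when $k$ is even, is the one place requiring a little care, but it presents no real obstacle once the decoupling is in hand.
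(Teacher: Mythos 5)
Your proposal is correct and follows essentially the same route as the paper: the paper likewise uses the anti-symmetry condition to decouple (A) into radial sinh-Gordon equations (noting the count of $(k+1)/2$ or $k/2$ copies, i.e.\ the orbit structure of $j \mapsto k-j$ including the fixed point $w_{k/2}\equiv 0$ when $k$ is even), and then invokes the McCoy--Tracy--Wu/Guest--Its--Lin classification of global radial sinh-Gordon solutions by their asymptotic exponent $m \in [-1,1]$. Your treatment of the fixed-point index and the consistency check $m_{k-j}=-m_j$ is exactly the bookkeeping the paper leaves implicit.
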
  \vspace{3mm}
	
	For \(k \in \mathbb{N} \), the solution of the tt*-equation of (A) consists of \(k+1 \)-solutions of the sinh-Gordon equation. In the following section, we investigate the relation between the tt*-equation constructed from the \(({\rm SU}_2)_k \)-fusion ring and holomorphic data (holomorphic potentials) in terms of the DPW method.

	\section{The \(({\rm SU}_2)_k \)-fusion ring and holomorphic data}
	In this section, we characterize the solutions of (A) by holomorphic 1-forms, or, equivalently, by \(k+1 \)-tuples \((l_0,\cdots,l_k) \). More precisely, we construct a solution of (A) from a matrix-valued 1-form
	\begin{equation}
		\frac{1}{\lambda} \left(\begin{array}{ccc}
			& & z^{l_0} \\
			& \iddots & \\
			z^{l_k} & & 
		\end{array} \right)dz, \ \ \ \lambda \in S^1,\ z \in \mathbb{C}-(-\infty,0],\ l_0,\cdots,l_k \in [-1,\infty), \nonumber
	\end{equation}
	by using the DPW method. In general, the reality constraint \(\kappa^2 = Id \) induces a certain symmetry on \(\nabla^{\lambda } \) in the tt*-structure \((E,\eta,C,g,\nabla^{\lambda}) \) (see \cite{D1993}). From this symmetry, we give a real form on \({\rm Aut}(E) \) then the DPW method can be applied to describe the tt*-structure. \vspace{1mm} \\
	
	First, we make a gauge transformation of the connection \(\nabla^{\lambda} \) of Definition 2.1.
	\begin{proposition}
		Let \((E,\eta,g,C,\nabla^{\lambda}) \) be the tt*-structure in Proposition \ref{prop2.1}. Put
		\begin{equation}
			(\tilde{\tau}_0,\cdots,\tilde{\tau}_k) = (\tau_0,\cdots,\tau_k) \left(\begin{array}{ccc}
				e^{\frac{w_k}{2}} & & \\
				& \ddots & \\
				& & e^{\frac{w_0}{2} }
			\end{array} \right), \nonumber
		\end{equation}
		and define \(\alpha \) by \(\nabla^{\lambda}(\tilde{\tau}_0,\cdots,\tilde{\tau}_k) = (\tilde{\tau}_0,\cdots,\tilde{\tau}_k) \alpha \). Then we have
		\begin{align}
			\alpha &= \left(\begin{array}{ccc}
				\frac{1}{2}(w_0)_t & & -\lambda^{-1} e^{\frac{1}{2}(w_0-w_k) } \\
				& \ddots & \\
				-\lambda^{-1} e^{\frac{1}{2}(w_k-w_0) } & & \frac{1}{2} (w_k)_t 
			\end{array} \right)dt \nonumber \\
			&\hspace{3cm}+ \left(\begin{array}{ccc}
				\frac{1}{2} (w_0)_{\overline{t} } & & -\lambda e^{\frac{1}{2}(w_k-w_0 ) } \\
				& \ddots & \\
				-\lambda e^{\frac{1}{2}(w_0-w_k ) } & & \frac{1}{2} (w_k)_{\overline{t} }
			\end{array} \right)d\overline{t}. \nonumber
		\end{align}
	\end{proposition}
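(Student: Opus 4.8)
The plan is to treat this as a routine change-of-frame (gauge) computation; the only inputs are the connection matrix of \(\nabla^{\lambda}\) in the holomorphic frame \((\tau_0,\dots,\tau_k)\) obtained in the proof of Proposition \ref{prop2.1}, together with the anti-symmetry relation \(w_j + w_{k-j} = 0\).

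First I would record the matrix \(\omega\) defined by \(\nabla^{\lambda}(\tau_0,\dots,\tau_k) = (\tau_0,\dots,\tau_k)\,\omega\); this is precisely the bracketed matrix in the proof of Proposition \ref{prop2.1}, whose \((1,0)\)-part consists of the diagonal \(\mathrm{diag}((w_0)_t,\dots,(w_k)_t)\) plus the anti-diagonal constant \(-\lambda^{-1}\), and whose \((0,1)\)-part is the anti-diagonal matrix with entries \(-\lambda e^{w_{k-j}-w_j}\). Writing \(G=\mathrm{diag}(e^{w_k/2},\dots,e^{w_0/2})\) for the diagonal matrix relating the two frames, so that \((\tilde{\tau}_0,\dots,\tilde{\tau}_k)=(\tau_0,\dots,\tau_k)\,G\), the transformation law for a connection under a change of frame gives
\[
\alpha = G^{-1}\omega G + G^{-1}\,dG,
\]
and the entire proof reduces to evaluating these two terms.

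For the conjugation term, since \(G\) is diagonal the diagonal entries of \(\omega\) are left unchanged, while each anti-diagonal \((j,k-j)\)-entry is rescaled by \(g_j^{-1}g_{k-j}=e^{(w_j-w_{k-j})/2}\). This converts the constant \(-\lambda^{-1}\) into \(-\lambda^{-1}e^{(w_j-w_{k-j})/2}\) and the entries \(-\lambda e^{w_{k-j}-w_j}\) into \(-\lambda e^{(w_{k-j}-w_j)/2}\), which are exactly the off-diagonal terms appearing in the stated \(\alpha\). The derivative term \(G^{-1}dG=\tfrac12\,\mathrm{diag}(dw_k,\dots,dw_0)\) contributes only to the diagonal: combined with the Chern diagonal it produces the \((j,j)\)-entry \((w_j)_t+\tfrac12(w_{k-j})_t\) in the \((1,0)\)-part, and here I would use the anti-symmetry \(w_{k-j}=-w_j\) to collapse this to \(\tfrac12(w_j)_t\); the \((0,1)\)-diagonal is obtained by the same computation.

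I do not expect any genuine obstacle: the argument is mechanical once the transformation law is in place. The only care needed is bookkeeping, namely tracking the index reversal \(j\mapsto k-j\) forced by the anti-diagonal form of \(C\) (Lemma \ref{lem2.4}) and applying \(w_j+w_{k-j}=0\) consistently in both the diagonal and the off-diagonal entries. As a conceptual check it is worth noting that \(G\) is chosen precisely so that \(g(\tilde{\tau}_j,\tilde{\tau}_j)=e^{w_{k-j}+w_j}=1\); thus \((\tilde{\tau}_0,\dots,\tilde{\tau}_k)\) is a \(g\)-unitary frame, which is why the Chern (diagonal) part of \(\alpha\) becomes the skew-Hermitian part and the \(C\)-terms the Hermitian part, exactly the normal form needed for the DPW method in the next step.
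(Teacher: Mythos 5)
Your overall method is exactly the paper's: the proof given there is simply ``Direct calculation,'' and organizing that calculation as \(\alpha = G^{-1}\omega G + G^{-1}dG\), with \(\omega\) read off from the proof of Proposition \ref{prop2.1} and \(G = \mathrm{diag}(e^{w_k/2},\dots,e^{w_0/2})\), is the right way to do it. Your handling of the conjugation term is correct (the diagonal is untouched, and each \((j,k-j)\)-entry is rescaled by \(e^{(w_j-w_{k-j})/2}\), which yields the stated off-diagonal blocks), as is your \((1,0)\)-diagonal computation \((w_j)_t + \tfrac12 (w_{k-j})_t = \tfrac12 (w_j)_t\).

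However, the sentence ``the \((0,1)\)-diagonal is obtained by the same computation'' is a genuine gap, because the computation there is \emph{not} the same. In the holomorphic frame \((\tau_0,\dots,\tau_k)\) the \((0,1)\)-part of \(\omega\) has no diagonal term at all: \(\overline{\partial}_E\) annihilates a holomorphic frame, the Chern connection contributes a diagonal only in the \(dt\)-direction, and \(\lambda C^{\dagger_g} d\overline{t}\) is anti-diagonal. Hence the entire \((0,1)\)-diagonal of \(\alpha\) comes from \(G^{-1}\overline{\partial}G\), whose \((j,j)\)-entry is \(\tfrac12 (w_{k-j})_{\overline{t}}\,d\overline{t} = -\tfrac12 (w_j)_{\overline{t}}\,d\overline{t}\). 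This has the \emph{opposite} sign to the entry \(+\tfrac12 (w_j)_{\overline{t}}\) printed in the Proposition; there is no ``same computation'' that produces the printed sign, since the gauge term \(G^{-1}dG\) necessarily contributes equally to the \(dt\)- and \(d\overline{t}\)-diagonals while the Chern term appears only in the former. In fact the sign your computation actually produces is the correct one: it agrees with the form \(\alpha_k = (w_t + \tfrac1\lambda W)dt + (-w_{\overline{t}} + \lambda W^t)d\overline{t}\) used in Section 3.2, with the symmetry \(c(\alpha(\lambda)) = \alpha(1/\overline{\lambda})\) asserted immediately after the Proposition, and with your own closing observation --- in the \(g\)-unitary frame \((\tilde{\tau}_j)\) the metric part of the connection must be skew-Hermitian, i.e.\ \(\tfrac12(w_j)_t\,dt - \tfrac12(w_j)_{\overline{t}}\,d\overline{t}\), whereas the printed diagonal sums to the Hermitian form \(\tfrac12\,dw_j\). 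So a complete write-up must carry out this diagonal computation explicitly and either state the corrected \((0,1)\)-part (diagonal entries \(\tfrac12(w_k)_{\overline{t}},\dots,\tfrac12(w_0)_{\overline{t}}\), equivalently \(-\tfrac12(w_0)_{\overline{t}},\dots,-\tfrac12(w_k)_{\overline{t}}\)) or flag the sign discrepancy in the Proposition as printed; asserting agreement by ``the same computation'' conceals exactly the point where the statement and the calculation diverge.
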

	\begin{proof}
		Direct calculation. \\
	\end{proof}  \vspace{3mm}
	
	To apply the DPW method (loop group method) we regard \(\alpha \) as a loop algebra-valued 1-form. We note first that it has the symmetries
	\begin{equation}
		c(\alpha(\lambda)) = \alpha(1/\overline{\lambda}),\ \ \ \sigma(\alpha(\lambda) ) = \alpha(-\lambda), \nonumber
	\end{equation}
	where
	\begin{equation}
		c(A) = \Delta \overline{A} \Delta,\ \ \ \sigma(A) = -\Delta A^t \Delta,\ \ \ \Delta = \left(\begin{array}{ccc}
			& & 1 \\
			& \iddots & \\
			1 & &
		\end{array} \right), \nonumber
	\end{equation}
	for \(A \in M_{k+1} \mathbb{C} \). As shown in \cite{D1993}, \cite{GIL20151}, \cite{GIL20152}, \cite{GIL2020}, these symmetries imply that a solution of (A) corresponds to a harmonic map into the symmetric space \({\rm GL}_{k+1} \mathbb{R}/{\rm O}_{k+1} \).  \vspace{3mm}
	
	\begin{remark}
		The connection form \(\alpha \) also has the symmetry
		\begin{equation}
			\tau(\alpha(\lambda)) = \alpha((-1)^{k+2} \lambda), \nonumber
		\end{equation}
		where
		\begin{equation}
			\tau(A) = \left(\begin{array}{ccc}
				1 & & \\
				& \ddots & \\
				& & (-1)^k
			\end{array} \right) A \left(\begin{array}{ccc}
				1 & & \\
				& \ddots & \\
				& & (-1)^k
			\end{array} \right), \nonumber
		\end{equation}
		for \(A \in {\rm M}_{k+1} \mathbb{C} \). From the view point of physics, this symmetry is given by the ``\(\mathbb{Z}_2 \)-symmetry'' in Remark \ref{rem2.3}. \\
		\qed
	\end{remark}  \vspace{3mm}
	
	The DPW construction for a semisimple Lie group \(G \subset {\rm GL}_n \mathbb{C}\ (n \in \mathbb{N}) \) can be summarized as follows.
	\begin{enumerate}
		\item [(a)] We solve \(d\phi = \phi \xi \) with the initial condition \(\phi(z_0) = \phi_0 \in \Lambda G^{\mathbb{C}}, z_0 \in \mathbb{C}-(-\infty,0] \). \vspace{2mm}
		\item [(b)] We split \(\phi = \phi_{\mathbb{R}} B \) by the Iwasawa factorization for \(\Lambda G^{\mathbb{C}} \) on a neighbourhood \(\mathcal{U} \subset \mathbb{C}^* \) of \(z_0 \), where \(\phi_{\mathbb{R}} \in \Lambda G, B \in \Lambda^+ G^{\mathbb{C} } \). \vspace{2mm}
		\item [(c)] We define \(\alpha = \phi_{\mathbb{R}}^{-1} d\phi_{\mathbb{R}} \) on \(\mathcal{U} \), then \(\alpha \) satisfies the zero-curvature equation.
	\end{enumerate}
	In our case, we consider the DPW construction for the real form \(G \) of \({\rm GL}_{k+1} \mathbb{C} \) with respect to the involution induced by \(c \). For example, for the case \(k=1 \) we choose \(G = {\rm SU}_{1,1} \).  \vspace{3mm}
	
	\begin{remark}
		In general, the Iwasawa factorization of \(\phi(z) \) is not defined for all \(z \in \mathbb{C}^* \). However, for the case \(G = {\rm SU}_{1,1} \) and Smyth potentials \(\xi \) we can choose \(\phi_0, z_0 \) so that \(\mathcal{U} = \mathbb{C}^* \) if \(l_0,l_1 \in \mathbb{Z}_{>-1} \) and  \(\mathcal{U} = \mathbb{C}-(-\infty,0] \) if \(l_0 \notin \mathbb{Z}_{>-1} \) or \(l_1 \notin \mathbb{Z}_{>-1} \) (see Corollary 7.5 and Corollary 8.1 of \cite{GIL2020}). In both cases, we have \((zB)|_{\lambda = 0} \rightarrow 0 \) as \(z \rightarrow 0 \) and \(B(z,\overline{z})|_{\lambda = 0} = B(|z|)|_{\lambda = 0} \). \\
		\qed
	\end{remark}

	\subsection{The case \(k= 1 \).}
	The tt*-equation constructed from the \(({\rm SU}_2)_1 \)-fusion ring is the sinh-Gordon equation itself. We follow \cite{GL2014}. Let us put \(w = w_0 = -w_1 \), then \(w \) is a solution of the sinh-Gordon equation
	\begin{equation}
		(w)_{t \overline{t}} = e^{2w} - e^{-2w}, \nonumber
	\end{equation}
	with the asymptotic behaviour \(w \sim m \log{|t|} \) as \(t \rightarrow 0 \) for \(-1 \le m \le 1 \). We recall the one-to-one correspondence between solutions of the sinh-Gordon equation and certain matrix-valued holomorphic 1-forms.  \vspace{3mm}
	
	\begin{proposition}[\cite{BI1995}, \cite{GIL2020}, \cite{GL2014}]\label{prop3.3}
		For any fixed normalization constant \(n > -2 \), there is a one-to-one correspondence between global solutions \(w \) of the sinh-Gordon equation and holomorphic 1-forms
		\begin{equation}
			\xi = \frac{1}{\lambda} \left(\begin{array}{cc}
				0 & z^{l_0 } \\
				z^{l_1 } & 0
			\end{array} \right)dz, \nonumber
		\end{equation}
		on \(\mathbb{C} - (-\infty,0] \), where \(l_0,l_1 \in \mathbb{R}_{\ge -1},\ l_0 + l_1 = n \). Here, we have
		\begin{equation}
			w \sim \frac{l_0-l_1}{n+2} \log{|t|}\ \ \ {\rm as}\ \ t \rightarrow0,\ \ {\rm where}\ \ t = \frac{2}{n+2}z^{\frac{n+2}{2} }. \nonumber
		\end{equation}
	\end{proposition}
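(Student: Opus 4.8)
The plan is to realize the asserted correspondence as a composition of two bijections, factoring everything through the asymptotic datum $m\in[-1,1]$. By the correspondence recorded in Section 2 between global sinh-Gordon solutions and asymptotic data (the $k=1$ instance of the Proposition attributed to \cite{GIL20151}), the solutions $w$ are already in bijection with $m\in[-1,1]$ via $w\sim m\log|t|$. Hence it suffices to prove three things: (i) the DPW construction applied to $\xi$ yields a global solution $w$; (ii) its asymptotic coefficient is exactly $m=(l_0-l_1)/(n+2)$; and (iii) the assignment $(l_0,l_1)\mapsto(l_0-l_1)/(n+2)$ is a bijection of the admissible parameters onto $[-1,1]$. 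Statement (iii) is immediate: with $l_1=n-l_0$ one has $(l_0-l_1)/(n+2)=(2l_0-n)/(n+2)$, an increasing affine function of $l_0\in[-1,n+1]$ taking the endpoint values $-1$ and $1$, hence a bijection onto $[-1,1]$. Composing (iii) with the Section~2 bijection then gives the statement.

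For (i) I would run the DPW recipe (a)--(c) above with $G$ the real form ${\rm SU}_{1,1}$ singled out earlier for $k=1$: solve $d\phi=\phi\xi$, Iwasawa-split $\phi=\phi_{\mathbb{R}}B$, and set $\alpha=\phi_{\mathbb{R}}^{-1}d\phi_{\mathbb{R}}$. For $k=1$ the connection $\alpha$ has exactly the sinh-Gordon Lax form computed in the preceding gauge-transformation Proposition, so its zero-curvature condition is the sinh-Gordon equation and produces a solution $w$. That the Iwasawa factorization is available on all of $\mathbb{C}^{*}$ (respectively on $\mathbb{C}-(-\infty,0]$), so that $w$ is globally defined, is precisely the content of the preceding Remark, citing Corollaries 7.5 and 8.1 of \cite{GIL2020}. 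The radial symmetry $w=w(|t|)$ I would obtain from the homogeneity of the Smyth potential under $z\mapsto e^{i\theta}z$ together with a compensating rotation in $\lambda$, the standard symmetry argument for Smyth potentials.

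The analytic heart, and the step I expect to be the main obstacle, is (ii). Here I would first normalize $\xi$ by the change of variable $t=\tfrac{2}{n+2}z^{(n+2)/2}$. Since $l_0+l_1=n$ gives $dt=z^{n/2}dz$ and therefore $z^{l_j}dz=z^{\pm(l_0-l_1)/2}dt$, the potential becomes, up to a constant diagonal conjugation, $\tfrac{1}{\lambda}\left(\begin{smallmatrix}0 & t^{m}\\ t^{-m} & 0\end{smallmatrix}\right)dt$ with $m=(l_0-l_1)/(n+2)$. A further (multivalued, but locally valid) gauge by $\mathrm{diag}(t^{m/2},t^{-m/2})$ removes the $t$-dependence of the off-diagonal part at the cost of introducing the regular-singular term $\tfrac{m}{2}\,\mathrm{diag}(1,-1)\,\tfrac{dt}{t}$ at $t=0$. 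The formal exponent $m/2$ of this residue is the local monodromy datum that, through the isomonodromy / Riemann--Hilbert analysis of the associated Painlev\'e III equation, governs the leading behaviour of $w$. Converting this formal datum at $t=0$ into the genuine asymptotics $w\sim m\log|t|$ of the sinh-Gordon solution is the delicate part; rather than rederive it I would invoke the connection-problem results of \cite{BI1995}, \cite{GL2014}, \cite{GIL2020}, which supply exactly this matching and pin down the constant. Once (ii) is in hand, the three-step reduction above closes the argument.
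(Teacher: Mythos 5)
The first thing to note is that the paper contains no proof of Proposition \ref{prop3.3}: the bracketed citations \cite{BI1995}, \cite{GIL2020}, \cite{GL2014} are its entire justification, so your reconstruction is not competing against an argument in the text. Judged on its own terms, your outline is correct and is organized more informatively than a bare citation. Step (iii) is a correct elementary verification: with $l_1=n-l_0$ and $l_0\in[-1,n+1]$, the map $l_0\mapsto(2l_0-n)/(n+2)$ is an affine bijection onto $[-1,1]$. Pivoting through the McCoy--Tracy--Wu/Guest--Its--Lin correspondence between global sinh-Gordon solutions and $m\in[-1,1]$ (the $k=1$ case of the Section 2 proposition) is also the right logical move: once the DPW assignment $\xi\mapsto w$ is known to produce the solution with asymptotic coefficient $(l_0-l_1)/(n+2)$, bijectivity of the explicit parameter map forces bijectivity of $\xi\mapsto w$ itself. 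Your normalization in (ii) is exactly right, and in fact cleaner than you suggest: with $t=\tfrac{2}{n+2}z^{(n+2)/2}$ one gets $z^{l_0}dz=\bigl(\tfrac{n+2}{2}\bigr)^{m}t^{m}dt$ and $z^{l_1}dz=\bigl(\tfrac{n+2}{2}\bigr)^{-m}t^{-m}dt$, so the two constants are reciprocal and a single constant diagonal conjugation removes both simultaneously --- this is precisely why the factor $\tfrac{2}{n+2}$ appears in the statement. Where you defer to the literature --- global solvability of the Iwasawa factorization (Corollaries 7.5 and 8.1 of \cite{GIL2020}) and, above all, the connection problem converting the regular-singular exponent at $t=0$ into the genuine asymptotics $w\sim m\log|t|$ --- is exactly where the real analysis lives; those results cannot be rederived in a few lines, and citing them is what the paper does wholesale. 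In short, your proof is an organized reduction to the same sources the paper invokes: it establishes nothing less than the paper does, and it has the additional merit of isolating which steps are elementary and which one is deep.
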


	\subsection{The case \(k \ge 2 \).}
	The DPW method can be applied in the same way to the case \(k \ge 2 \). We begin with the holomorphic potential
	\begin{equation}
		\xi = \frac{1}{\lambda} \left(\begin{array}{ccc}
			& & z^{l_0} \\
			& \iddots & \\
			z^{l_k} & &
		\end{array} \right)dz,\ \ \ z \in \mathbb{C}-(-\infty,0],\ \ l_0,\cdots,l_k \in \mathbb{R}, \nonumber
	\end{equation}
	Let
	\begin{equation}
		\alpha_k = (w_t + \frac{1}{\lambda} W )dt + (-w_{\overline{t} } + \lambda W^t )d\overline{t},\ \ \ \lambda \in S^1, \nonumber
	\end{equation}
	where
	\begin{equation}
		w = \left(\begin{array}{ccc}
			\frac{1}{2} w_0 & &  \\
			& \ddots & \\
			& & \frac{1}{2} w_k
		\end{array} \right),\ \ \
		W = \left(\begin{array}{ccc}
			& & e^{\frac{1}{2} (w_0 - w_k) } \\
			& \iddots & \\
			e^{\frac{1}{2}(w_k - w_0) } & & 
		\end{array} \right). \nonumber
	\end{equation}
	Then the zero curvature equation \(d\alpha_k + \alpha_k \wedge \alpha_k = 0 \) is equivalent to (\ref{A}). We obtain the following generalization of Proposition \ref{prop3.3}.  \vspace{3mm}
	
	\begin{proposition}\label{prop3.4}
		For any fixed normalization constant \(n > -2 \), there is a one-to-one correspondence between global solutions \(\{w_j \}_{j=0}^{k} \) of (A) and holomprohic potentials
		\begin{equation}
			\xi = \frac{1}{\lambda} \left(\begin{array}{ccc}
				& & z^{l_0} \\
				& \iddots & \\
				z^{l_k} & &
			\end{array} \right)dz, \nonumber
		\end{equation}
		on \(\mathbb{C}-(-\infty,0]\), where \(l_0,\cdots,l_k \in \mathbb{R}_{\ge -1 },\ l_j + l_{k-j} = n \). Here, we have
		\begin{equation}
			w_j \sim \frac{l_j - l_{k-j}}{n+2} \log|t| \ \ \ {\rm as}\ \ t \rightarrow 0,\ \ {\rm where}\ \ t = \frac{2}{n+2}z^{\frac{n+2}{2} }. \nonumber
		\end{equation}
	\end{proposition}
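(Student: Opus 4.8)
The plan is to exploit the anti-diagonal structure of both the potential \(\xi\) and the connection form \(\alpha_k\) in order to reduce the statement to \(\lfloor (k+1)/2 \rfloor\) applications of Proposition \ref{prop3.3}, together with a trivial middle factor when \(k\) is even. Concretely, I would first introduce the permutation \(P\) of the index set \(\{0,\dots,k\}\) that brings each pair \(\{j,k-j\}\) into adjacent positions. Conjugation by the associated (constant) permutation matrix preserves the zero-curvature equation and carries the anti-diagonal matrices \(W\), \(W^t\) and the potential \(\xi\) into block-diagonal form, while the diagonal part \(w=\mathrm{diag}(\tfrac12 w_0,\dots,\tfrac12 w_k)\) remains diagonal. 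Using the anti-symmetry \(w_j+w_{k-j}=0\), the \(2\times2\) block attached to a pair \(\{j,k-j\}\) with \(j<k-j\) becomes exactly the sinh-Gordon connection of the case \(k=1\) for the single function \(w_j\), with off-diagonal entries \(e^{\pm w_j}\); if \(k\) is even, the fixed index \(k/2\) gives a \(1\times1\) block carrying \(w_{k/2}\equiv 0\).

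Next I would check that this block decomposition is compatible with the reality structure used in the DPW construction. Under the same permutation the anti-diagonal matrix \(\Delta\) becomes block-diagonal, equal to \(\left(\begin{smallmatrix}0&1\\1&0\end{smallmatrix}\right)\) on each \(2\times2\) block and to \((1)\) on the middle index. Hence the involution \(c(A)=\Delta\overline{A}\Delta\) and the associated real form \(G\) of \(\mathrm{GL}_{k+1}\mathbb{C}\) split as a product, with each \(2\times2\) factor being the group \(\mathrm{SU}_{1,1}\) appearing in the case \(k=1\), and (when \(k\) is even) a one-dimensional real factor for the middle block. Because the integration of \(d\phi=\phi\xi\), the Iwasawa factorization for \(\Lambda G^{\mathbb{C}}\), and the extraction of \(\alpha=\phi_{\mathbb{R}}^{-1}d\phi_{\mathbb{R}}\) are all performed factor by factor on a product of loop groups, the DPW construction for the full system is the product of the constructions for the individual blocks.

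With the decomposition in hand, I would apply Proposition \ref{prop3.3} to each \(2\times2\) block. For the block attached to \(\{j,k-j\}\) the relevant exponents are \((l_j,l_{k-j})\); the hypothesis \(l_j+l_{k-j}=n\), which is the \emph{same} \(n\) for every pair, is exactly what guarantees that the common change of variable \(t=\tfrac{2}{n+2}z^{(n+2)/2}\) simultaneously linearizes every block. Proposition \ref{prop3.3} then yields, for each block, a bijection between global sinh-Gordon solutions \(w_j\) and pairs \(l_j,l_{k-j}\in\mathbb{R}_{\ge -1}\) with \(l_j+l_{k-j}=n\), together with the asymptotics \(w_j\sim\frac{l_j-l_{k-j}}{n+2}\log|t|\). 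The middle block contributes \(w_{k/2}\equiv 0\) with \(l_{k/2}=n/2\), consistent with \(\frac{l_{k/2}-l_{k/2}}{n+2}=0\). Taking the product of these bijections gives the claimed one-to-one correspondence for the whole system (\ref{A}) and the stated asymptotic formula.

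The step I expect to require the most care is the compatibility of the \emph{global} Iwasawa factorization with the block decomposition, i.e. ensuring that the factorization is valid on a single common domain \(\mathcal{U}\) (all of \(\mathbb{C}^*\), or \(\mathbb{C}-(-\infty,0]\)) simultaneously for every block. This is where the factor-by-factor nature of the construction must be combined with the sharp domain results for \(\mathrm{SU}_{1,1}\) and Smyth potentials; the needed statements are Corollary 7.5 and Corollary 8.1 of \cite{GIL2020}, which control exactly when \(\mathcal{U}=\mathbb{C}^*\) versus \(\mathcal{U}=\mathbb{C}-(-\infty,0]\) in terms of whether the exponents \(l_j\) are integers \(>-1\). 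Granting these, each block is globally defined on the same domain and the product factorization is legitimate, which completes the argument.
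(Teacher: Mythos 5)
Your proposal is correct and follows essentially the same route as the paper: the paper's proof also introduces a constant permutation matrix \(D\) (pairing the indices \(j\) and \(k-j\), with a \(1\times 1\) block \(\lambda^{-1}z^{l_{k/2}}\) left over when \(k\) is even) to bring \(\xi\) and the connection form into block-diagonal form, and then applies Proposition \ref{prop3.3} block by block. Your additional checks --- compatibility of the permutation with the involution \(c\) and the real form, and the common Iwasawa domain via Corollaries 7.5 and 8.1 of \cite{GIL2020} --- are details the paper leaves implicit, but they do not change the argument.
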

	\begin{proof}
		We define a constant matrix \(D = (d_{jl})_{0 \le j,l \le k} \) by
		\begin{equation}
			d_{2m,l} = \left\{\begin{array}{cc}
				1 & {\rm if}\ l = m, \\
				0 & {\rm otherwise},
			\end{array}
			\right.\ \ \ 	d_{2m+1,l} = \left\{\begin{array}{cc}
				1 & {\rm if}\ l = k-m, \\
				0 & {\rm otherwise},
			\end{array}
			\right. \nonumber
		\end{equation}
		then we have
		\begin{equation}
			\xi \cdot D^{-1} = \left\{ \begin{array}{cc} \left(\begin{array}{ccc}
					\xi_0 & & \\
					& \ddots & \\
					& & \xi_{[\frac{k-1}{2}]}
				\end{array} \right) & \ \ \ {\rm if}\ k\ {\rm is\ odd}, \\
				\left(\begin{array}{cccc}
					\xi_0 & & & \\
					& \ddots & & \\
					& & \xi_{[\frac{k-1}{2}]} & \\
					& & & \lambda^{-1} z^{l_{\frac{k}{2}}}
				\end{array} \right) & \ \ \ {\rm if}\ k\ {\rm is\ even},
			\end{array} \right. \nonumber
		\end{equation}
		where
		\begin{equation}
			\xi_j = \frac{1}{\lambda} \left(\begin{array}{cc}
				0 & z^{l_j} \\
				z^{l_{k-j}} & 0
			\end{array} \right)dz,\ \ \ j = 0,\cdots,\left[\frac{k-1}{2} \right]. \nonumber
		\end{equation}
		From Proposition \ref{prop3.3}, we obtain a one-to-one correspondence between solutions of (A) and holomorphic potentials \(\xi \). \\
	\end{proof}  \vspace{3mm}
	
	Hence, we obtain a solution of the tt*-equation constructed from the \(({\rm SU}_2)_k \)-fusion ring from a holomorphic potential. We also call this \(\xi \) (or \((l_0,\cdots,l_k) \)) holomorphic data.
	Next, we return to the definition of the \(({\rm SU}_2)_k \)-fusion ring and we investigate the relation between holomorphic data and representations of \({\rm SU}_2 \).

	\section{The \(({\rm SU}_2)_k \)-fusion ring and representations of \({\rm SU}_2 \)}
	The \(({\rm SU}_2)_k \)-fusion ring can be described as a quotient \(R({\rm SU}_2)/I_k({\rm SU}_2) \) of the representation ring \(R({\rm SU}_2) \) by the ``fusion ideal'' generated by the \((k+1) \)-th symmetric power of the standard representation. In this section, we consider the projection from representations of \({\rm SU}_2 \) into the \(({\rm SU}_2)_k \)-fusion ring and we give a relation between representations of \({\rm SU}_2 \) and solutions of the tt*-equation constructed from the \(({\rm SU}_2)_k \)-fusion ring.  \vspace{2mm}
	
	Let \(\pi_n:{\rm SU}_2 \rightarrow {\rm GL}_{n+1} \mathbb{C}\ (n \in \mathbb{Z}_{\ge 0 })  \) be the \(n \)-th symmetric power \(S^n(\mathbb{C}^2 ) \) of the standard representation, then \(\{\pi_j \}_{j \in \mathbb{Z}_{\ge 0} } \) give a basis of \(R({\rm SU}_2) \). The fusion ring \(I_k({\rm SU}_2) \) is defined as the ideal generated by \(\pi_{k+1} \). Since \(R({\rm SU}_2) \) can be identified with the character ring of \({\rm SU}_2 \), we can regard \(R({\rm SU}_2) \) as the polynomial ring \(\mathbb{C}[X] \) by identifying the character \(\chi_{\pi_1} \) of \(\pi_1 \) with \(2X \).  \vspace{3mm}
	
	\begin{lem}\label{lem4.1}
		Under the identification above, we have
		\begin{equation}
			\chi_{\pi_j} \simeq U_j(X),\ \ \ j \in \mathbb{Z}_{\ge 0}. \nonumber
		\end{equation}
	\end{lem}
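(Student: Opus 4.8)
The plan is to compute the character $\chi_{\pi_j}$ explicitly on a maximal torus of $\mathrm{SU}_2$ and to recognize the result as $U_j$. First I would fix the conjugacy-class representative $\mathrm{diag}(e^{iY}, e^{-iY})$ of the maximal torus. On this element the standard representation $\pi_1 = \mathbb{C}^2$ has eigenvalues $e^{iY}$ and $e^{-iY}$, so $\chi_{\pi_1} = e^{iY} + e^{-iY} = 2\cos Y$. Matching this with the identification $\chi_{\pi_1} \simeq 2X$ forces $X = \cos Y$, which is exactly the substitution appearing in the definition $U_n(\cos Y) = \sin((n+1)Y)/\sin Y$ used above. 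This bookkeeping of the normalization is the one point that requires care.

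Next I would determine the weights of $\pi_j = S^j(\mathbb{C}^2)$. Taking the monomials $x^{j-m}y^m$ ($0 \le m \le j$) in the two coordinates as a basis, the torus acts on $x^{j-m}y^m$ with eigenvalue $e^{i(j-2m)Y}$, so summing the eigenvalues gives
\[
\chi_{\pi_j} = \sum_{m=0}^{j} e^{i(j-2m)Y}.
\]
This is a finite geometric series; summing it and using $e^{iY}-e^{-iY} = 2i\sin Y$ yields
\[
\chi_{\pi_j} = \frac{e^{i(j+1)Y} - e^{-i(j+1)Y}}{e^{iY} - e^{-iY}} = \frac{\sin((j+1)Y)}{\sin Y}.
\]
By the defining formula for the second-kind Chebyshev polynomial, the right-hand side equals $U_j(\cos Y) = U_j(X)$, which is precisely the asserted identity $\chi_{\pi_j} \simeq U_j(X)$.

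An equally short alternative is induction via the Clebsch--Gordan decomposition $S^1 \otimes S^j \cong S^{j+1} \oplus S^{j-1}$, which gives $2X\,\chi_{\pi_j} = \chi_{\pi_{j+1}} + \chi_{\pi_{j-1}}$; together with the base cases $\chi_{\pi_0} = 1 = U_0$ and $\chi_{\pi_1} = 2X = U_1$, this coincides with the three-term recurrence $U_{j+1} = 2X\,U_j - U_{j-1}$, itself a restatement of the trigonometric identity $\sin((j+2)Y) + \sin(jY) = 2\cos Y \,\sin((j+1)Y)$. Since this is a statement about the completely standard representation theory of $\mathrm{SU}_2$, I do not anticipate any genuine obstacle; the proof amounts to identifying two classical recursions (or summing one geometric series) once the normalization $X = \cos Y$ is pinned down.
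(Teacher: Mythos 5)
Your proposal is correct and follows essentially the same route as the paper: evaluate the character on the torus element $\mathrm{diag}(e^{iY},e^{-iY})$, obtain $\sin((j+1)Y)/\sin Y = U_j(\cos Y)$, and pin down the normalization via $\chi_{\pi_1} = 2\cos Y \simeq 2X$. The only difference is that the paper simply quotes the standard character value while you derive it from the weight basis $x^{j-m}y^m$ by summing the geometric series (and note the Clebsch--Gordan induction as an alternative), which fills in a step the paper takes for granted.
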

	\begin{proof}
		We have
		\begin{align}
			\chi_{\pi_j}\left(\left(\begin{array}{cc}
				e^{i\theta} & 0 \\
				0 & e^{-i\theta}
			\end{array} \right) \right) &= \frac{\sin{((j+1)\theta )}}{\sin{\theta}} = U_j(\cos{\theta}),\ \ \ \theta \in \mathbb{R} \nonumber \\
			&= U_j \left( \frac{1}{2} \chi_{\pi_1}\left(\left(\begin{array}{cc}
				e^{i\theta} & 0 \\
				0 & e^{-i\theta}
			\end{array} \right) \right) \right). \nonumber
		\end{align}
		Thus, we obtain the stated result. \\
	\end{proof}  \vspace{3mm}
	
	Then \(\pi_{k+1} \) corresponds to the second kind of Chebyshev polynomial \(U_{k+1} = (k+1)^{-1} dT_{k+2}/dX \). Thus, we can identify
	\begin{equation}
		R_k \simeq R({\rm SU}_2)/I_k({\rm SU}_2), \nonumber
	\end{equation}
	This identification was used by Gepner (\cite{G1991}). Thus, our special basis \(\{[U_j] \}_{j=0}^{k} \) is given by the symmetric powers of the standard representations \(\{\pi_j \}_{j=0}^{k} \). \vspace{3mm} 
	
	We give a relation between representations of \({\rm SU}_2 \) and holomorphic data. Let \((\rho,V) \) be a representation of \({\rm SU}_2 \) and \(V = \bigoplus_{j \in \mathbb{Z}} V(j) \) be the weight decomposition, where
	\begin{equation}
		V(j) = \left\{v \in V\ | \ \rho\left( \left(\begin{array}{cc}
			e^{\sqrt{-1} \theta} & 0 \\
			0 & e^{-\sqrt{-1}\theta}
		\end{array} \right) \right)v = e^{\sqrt{-1} \theta j }v, \ \ \ \forall \theta \in \mathbb{R} \right\}. \nonumber
	\end{equation}
	In this paper, we call \(j \in \mathbb{Z} \) weight of \((\rho,V) \) if \(V(j) \neq \emptyset \) and highest weight of \((\rho,V) \) if \(V(j) \neq \emptyset \) and \(V(l) = \emptyset \) for all \(l \ge j+1 \).  \vspace{3mm}
	
	\begin{proposition} \label{prop4.1}
		There is a one-to-one correspondence between
		\begin{enumerate}
			\item [(a)] an equivalence class of representation \((\rho,V )\) of \({\rm SU}_2 \) with highest weight \(\le k \) and weight decomposition \( V = \bigoplus_{j \in \mathbb{Z}} V(j) \).
			\vspace{2mm}
			
			\item [(b)] a holomorphic potential
			\begin{equation}
				\xi = \frac{1}{\lambda} \left(\begin{array}{ccc}
					& & z^{l_0} \\
					& \iddots & \\
					z^{l_k} & &
				\end{array} \right)dz, \nonumber
			\end{equation}
			on \(\mathbb{C} \), where \(l_0,\cdots,l_k \in \mathbb{Z}_{\ge 0} \).
		\end{enumerate}
		Here, the correspondence is given by \(l_j = {\rm dim}V(j) - {\rm dim}V(j+2) \).
	\end{proposition}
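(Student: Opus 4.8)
The plan is to reduce everything to the classification of $\mathrm{SU}_2$-representations by their decomposition into irreducibles. By complete reducibility, any representation $(\rho,V)$ decomposes as $V \cong \bigoplus_{n \ge 0} \pi_n^{\oplus a_n}$ with multiplicities $a_n \in \mathbb{Z}_{\ge 0}$, and the tuple $(a_n)_{n \ge 0}$ is a complete invariant of the equivalence class. The hypothesis that the highest weight is $\le k$ is exactly the statement that $a_n = 0$ for $n > k$, so (a) is in bijection with $(a_0,\cdots,a_k) \in \mathbb{Z}_{\ge 0}^{k+1}$. Since the potentials in (b) are visibly parametrized by $(l_0,\cdots,l_k) \in \mathbb{Z}_{\ge 0}^{k+1}$, it suffices to show that the assignment $l_j = \dim V(j) - \dim V(j+2)$ becomes the identity on $\mathbb{Z}_{\ge 0}^{k+1}$ under these two parametrizations.

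First I would record the weight multiplicities of the irreducibles. From Lemma \ref{lem4.1}, the character of $\pi_n$ is $U_n$, so $\dim \pi_n(j) = 1$ when $|j| \le n$ and $j \equiv n \pmod 2$, and $\dim \pi_n(j) = 0$ otherwise. The key computation is the telescoping identity
\begin{equation}
	\dim \pi_n(j) - \dim \pi_n(j+2) = \delta_{jn}, \qquad j \ge 0, \nonumber
\end{equation}
which follows by checking the three cases $j > n$, $j \le n-2$ with $j \equiv n \pmod 2$, and $j = n$; the only surviving contribution is at $j = n$, where the top weight is present but the weight two steps above it is not. Summing over the irreducible summands and using linearity of $\dim V(\cdot)$ then gives $l_j = \sum_{n} a_n\,\delta_{jn} = a_j$ for every $0 \le j \le k$.

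This identity $l_j = a_j$ finishes the proof: it shows at once that $l_j \in \mathbb{Z}_{\ge 0}$ (well-definedness of the map into (b)), that the correspondence is the identity on $\mathbb{Z}_{\ge 0}^{k+1}$ and hence bijective, and that its inverse sends a potential $(l_0,\cdots,l_k)$ to the class of $\bigoplus_{j=0}^{k} \pi_j^{\oplus l_j}$. Finally, because each $l_j$ is a nonnegative integer the monomials $z^{l_j}$ are entire, so $\xi$ is genuinely holomorphic on all of $\mathbb{C}$, as required in (b) (and consistent with the case $\mathcal{U} = \mathbb{C}^*$ of the earlier Remark). The only point demanding care is the boundary bookkeeping in the telescoping identity, namely verifying that the parity condition forces the difference to be supported precisely at $j = n$ and that the index range $0 \le j \le k$ matches the highest-weight constraint; beyond this the argument is purely formal, so I expect no serious obstacle.
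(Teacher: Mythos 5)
Your proof is correct and follows essentially the same route as the paper: both decompose $(\rho,V)$ by complete reducibility into $\bigoplus_{j=0}^{k}\pi_j^{\oplus a_j}$ and verify that $l_j = \dim V(j) - \dim V(j+2)$ recovers the multiplicity $a_j$, which gives the bijection. Your telescoping identity $\dim\pi_n(j)-\dim\pi_n(j+2)=\delta_{jn}$ is just a cleaner packaging of the paper's case-by-parity computation of the weight spaces $V(2m)$ and $V(2m+1)$.
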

	\begin{proof}
		Given a representation \((\rho,V) \) of \({\rm SU}_2 \) with highest weight \(\le k \), we show \(l_0,\cdots,l_k \ge 0 \). It is well-known that there exist \(l_0,\cdots,l_k \in \mathbb{Z}_{\ge 0} \) such that
		\begin{equation}
			(\rho,V) \simeq \bigoplus_{j=0}^{k} (\pi_j,V_j)^{\oplus l_j }, \nonumber
		\end{equation}
		where \((\pi_j,V_j) \) is the \(j \)-th symmetric power of the standard representation. Let \(V_j = \bigoplus_{a=0}^{j} V_j(j-2a) \) be the weight decomposition of \(V_j \), then we have \({\rm dim}V_j(j-2a) = 1 \) and
		\begin{align}
			V(2m) = \bigoplus_{b = m}^{[\frac{k}{2}] } V_{2b}(2m)^{\oplus l_{2b} },\ \ \ V(2m+1) = \bigoplus_{b = m}^{[\frac{k-1}{2}] } V_{2b+1}(2m+1)^{\oplus l_{2b+1} }, \nonumber
		\end{align}
		\begin{equation}
			V(j) = \left\{
			\begin{array}{cc}
				\bigoplus_{b = m}^{[\frac{k}{2}] } V_{2b}(2m)^{\oplus l_{2b} } & {\rm if}\ j = 2m, \\
				\bigoplus_{b = m}^{[\frac{k-1}{2}] } V_{2b+1}(2m+1)^{\oplus l_{2b+1} } & {\rm if}\ j = 2m+1,
			\end{array}
			\right. \nonumber
		\end{equation}
		for \(0 \le j \le k\). Thus, we obtain \(l_j = {\rm dim}V(j) - {\rm dim}V(j+2) \ge 0 \). \vspace{1mm} \\
		Conversely, given a holomrphic potential \(\xi \). We put
		\begin{equation}
			(\rho, V) = \bigoplus_{j=0}^{k} (\pi_j,V_j)^{l_j}, \nonumber
		\end{equation}
		then \((\rho,V) \) is a representation of \({\rm SU}_2 \) with highest weight \(\le k \), weight decomposition \(V=\bigoplus_{j \in \mathbb{Z}} V(j) \) and \(l_j = V(j) - V(j+2) \). \\
	\end{proof}  \vspace{3mm}
	
	From a holomorphic potential in Proposition \ref{prop4.1}, we obtain a global solution \(\{w_j\}_{j=0}^{k} \) of (A) with the asymptotic behaviour
	\begin{equation}
		w_j \sim \frac{l_j - l_{k-j}}{l_j + l_{k-j} + 2} \log|t| \ \ \ {\rm as}\ \ t = \frac{2}{l_j+l_{k-j}+2 } z^{\frac{l_j+l_{k-j}+2}{2} } \rightarrow 0. \nonumber
	\end{equation}
	Thus, we obtain the following Corollary.  \vspace{3mm}
	
	\begin{cor}\label{cor4.2}
		Fix \(n \in \mathbb{Z}_{\ge 0} \). There is a one-to-one correspondence between
		\begin{enumerate}
			\item [(a')] an equivalence class of representations \((\rho,V )\) of \({\rm SU}_2 \) with highest weight \(\le k \) and weight decomposition \( V = \bigoplus_{j \in \mathbb{Z}} V(j) \) such that
			\begin{equation}
				n = {\rm dim}V(j) - {\rm dim}V(j+2) + {\rm dim}V(k-j) - {\rm dim}V(k-j+2), \nonumber
			\end{equation}
			for \(j = 0,\cdots,k \),
			\vspace{2mm}
			
			\item [(b')] a solution \(\{w_j \}_{j=0}^{k} \) of (A) with the asymptotic data
			\begin{equation}
				(m_0,m_1,\cdots,m_k) = \left(\frac{2l_0 - n}{n+2},\frac{2l_1 - n}{n+2},\cdots,\frac{2l_k - n}{n+2} \right). \nonumber
			\end{equation}
		\end{enumerate}
	\end{cor}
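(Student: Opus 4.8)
The plan is to obtain the stated bijection by composing the two correspondences already established, Proposition \ref{prop4.1} (representations $\leftrightarrow$ holomorphic data) and Proposition \ref{prop3.4} (holomorphic data $\leftrightarrow$ solutions of (A)), and then to verify that the defining condition of (a$'$) is precisely the normalization constraint appearing in Proposition \ref{prop3.4}.

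First I would invoke Proposition \ref{prop4.1}: an equivalence class of representations $(\rho,V)$ with highest weight $\le k$ corresponds bijectively to a holomorphic potential $\xi$ with exponents $l_j = \dim V(j) - \dim V(j+2) \in \mathbb{Z}_{\ge 0}$. The key observation is that the extra condition imposed in (a$'$),
\[
n = \dim V(j) - \dim V(j+2) + \dim V(k-j) - \dim V(k-j+2),
\]
is, under this correspondence, nothing but the identity $n = l_j + l_{k-j}$ for every $j = 0,\dots,k$. Hence restricting the bijection of Proposition \ref{prop4.1} to the representations satisfying (a$'$) produces exactly the holomorphic data $(l_0,\dots,l_k)$ with $l_j \in \mathbb{Z}_{\ge 0}$ and $l_j + l_{k-j} = n$.

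Next, since $n \in \mathbb{Z}_{\ge 0}$ satisfies $n > -2$ and each $l_j \ge 0 \ge -1$, I would apply Proposition \ref{prop3.4} with this normalization constant $n$. It yields a bijection between such holomorphic data and global solutions $\{w_j\}_{j=0}^{k}$ of (A) with
\[
w_j \sim \frac{l_j - l_{k-j}}{n+2}\log|t| \qquad \text{as } t \to 0,
\]
so that the asymptotic data is $m_j = (l_j - l_{k-j})/(n+2)$. Substituting $l_{k-j} = n - l_j$ (from $l_j + l_{k-j} = n$) converts this into $m_j = (2l_j - n)/(n+2)$, which is exactly the asymptotic data recorded in (b$'$). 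Composing the two restricted bijections then gives the desired one-to-one correspondence.

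I expect the only point requiring genuine care to be the bookkeeping in the reverse direction: given a solution in (b$'$), one recovers $l_j = \tfrac{1}{2}\left((n+2)m_j + n\right)$, and one must confirm that the parametrization in (b$'$) by integers $l_j \in \{0,1,\dots,n\}$ guarantees these are non-negative integers, so that Proposition \ref{prop4.1} genuinely returns a finite-dimensional representation with highest weight $\le k$. The remaining subtlety is to check that imposing (a$'$) is equivalent to the symmetry $l_j + l_{k-j} = n$ and neither over- nor under-determines the data; this is immediate from the formula $l_j = \dim V(j) - \dim V(j+2)$ of Proposition \ref{prop4.1}, together with the fact that the normalization constant $n$ in Proposition \ref{prop3.4} is by definition the common value of $l_j + l_{k-j}$.
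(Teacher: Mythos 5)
Your proposal is correct and takes essentially the same route as the paper: the paper obtains Corollary \ref{cor4.2} precisely by composing Proposition \ref{prop4.1} with Proposition \ref{prop3.4}, reading the condition in (a$'$) as $n = l_j + l_{k-j}$ and substituting $l_{k-j} = n - l_j$ into $m_j = (l_j - l_{k-j})/(n+2)$ to get $m_j = (2l_j - n)/(n+2)$. Your additional checks (that $n > -2$, that $l_j \ge 0 \ge -1$ puts the data in the admissible range of Proposition \ref{prop3.4}, and the recovery $l_j = \tfrac{1}{2}((n+2)m_j + n)$ in the reverse direction) are exactly the bookkeeping the paper leaves implicit.
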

	Thus, we obtain a relation between certain solutions of the tt*-equation constructed from the \(({\rm SU}_2)_k \)-fusion ring, holomorphic data and representations of \({\rm SU}_2 \).

	\subsection{A particular solution, and the supersymmetric \(A_k \) minimal model}
	Let us consider the particular solution of (A) given by the choice \(l_j = j \) and \(n=k \), and its corresponding holomorphic potential. By Proposition 3.3 (and its proof), this solution corresponds to the \(k+1 \)-solutions of the sinh-Gordon equaion which have holomorphic potentials \(\xi_j\ (1 \le j \le n) \). \vspace{2mm}
	
	Through the procedure of Corollary 4.5 of \cite{GO2022}, these holomorphic potentials correspond to the weights of level \(k \) in the Fundamental Weyl Alcove of \({\rm SU}_2 \), or, more naturally, to the positive energy representations of the affine Kac-Moody group (loop group) of \({\rm SU}_2 \) with highest weights \((j,k) \) (where \(k \) is the level). These are precisely the representations in the \(A_{k+1} \) minimal model, or more precisely the Virasoro minimal model of type \((2,k) \). We refer to \cite{GO2022} for the details of this correspondence, and to \cite{BS1993}, \cite{DMS1997} for Virasoro minimal models.  \vspace{2mm}
	
	We note that the asymptotic data from Corollary \ref{cor4.2} is
	\begin{equation}
		(m_0,\cdots,m_k) = \left(\frac{-k}{k+2}, \frac{-k+2}{k+2}, \cdots, \frac{k}{k+2} \right). \nonumber
	\end{equation}
	The same data arises as follows, as was observed by Cecotti and Vafa in \cite{C1991}. The supersymmetric \(A_k \) minimal model can be described by irreducible unitary highest weight representations of the Ramond algebra, an \(N=2 \) super Virasoro algebra. The \(N=2 \) super Virasoro algebra is an extension of Virasoro algebra spaned by certain operators \(\{{\bf 1},L_r,J_r,G_r^+,G_r^-\}_{r \in \mathbb{Z}} \) with certain commutation relations \cite{KV1989}. The highest weight representations can be characterized by a nonzero vector \(v \) satisfying 
	\begin{align}
		&L_r v = J_r v = G_r^{\pm}v = 0,\ \ \ \forall r >0, \nonumber \\
		&L_0 v = h v,\ \ \ J_0 v = Q v. \nonumber
	\end{align}
	and the eigenvalues \(h, Q \) take a finite number of values \cite{BFK1986}. In particular, the values of \(Q \) are given by
	\begin{equation}
		Q = \frac{-k}{k+2}, \frac{-k+2}{k+2}, \cdots, \frac{k}{k+2}. \nonumber
	\end{equation} \vspace{2mm}
	
	A further relation with physics is provided by the Landau-Ginzburg model based on the \(A_{k+1} \) singularity. It is well-known from \cite{CV1991} that this corresponds to the solution of the tt*-Toda equation with exactly the above asymptotic data \(m= (m_0,\cdots,m_k) \). Here we use the notation of \cite{GIL2023} for asymptotic data of the tt*-Toda equation, where the existence of this solution was established. The solution of (A) corresponding to this data, and the solution of the tt*-Toda equation corresponding to this data, are quite different functions (they are solutions of different equations). However, the coincidence of their asymptotic data is valid not only for the first term, but also for the respective second terms, which involve the gamma function expressions \(\Gamma(\frac{1}{k+2}), \Gamma(\frac{2}{k+2}),\cdots,\Gamma(\frac{k+1}{k+2}) \). This can be seen by substituting the above value of \(m = (m_0,\cdots,m_k) \) into the expression for \(w_j \) in Corollary 8.14 of \cite{GIL2023}: precisely these gamma function expressions are obtained. \vspace{2mm}
	
	As pointed out by Cecotti and Vafa, it is the first and second terms (rather than the first terms alone) which represent the \lq\lq boundary data\rq\rq of the solution of the tt*-equation. This deeper coincidence of the boundary data for our solution of (A) (given by the \(({\rm SU}_2)_k \)-fusion algebra) with the boundary data for the \(A_{k+1} \) singularity solution of the tt*-Toda equation (given by the \(({\rm SU}_{k+1} )_1 \)-fusion algebra) may be a reflection of the physically expected \lq\lq level-rank duality\rq\rq. \vspace{3mm}
	
	Let \((E,\eta,C,g) \) be the tt*-structure in section 2 with holomorphic data \((l_0,l_1,\cdots,l_k) = (0,1,\cdots,k) \). For representation \((\rho,V) \) of \({\rm SU}_2 \), we define a holomorphic section \(\tau_{\rho} \) by \(\tau_{\rho}(t) = (t,\sqrt{2(k+2)}[P(\rho)] ) \), where \(P \) is the projection from representations of \({\rm SU}_2 \) into \(R({\rm SU}_2)/I_k({\rm SU}_2) \) . Then we obtain the following Proposition.  \vspace{3mm}
	
	\begin{proposition} \label{prop4.3}
		If \((\rho,V) \) is an irreducible representation of \({\rm SU}_2 \), then \(w_{\rho} = g(\tau_{\rho},\tau_{\rho} ) \) is a solution of the sinh-Gordon equation. \vspace{1mm} \\
		Furthermore, let \(l_{\rho } \) be the highest weight of representation \((\rho,V) \) then we have
		\begin{equation}
			w_{\rho} = w_{\tilde{\rho}}\ \ \ \Longleftrightarrow \ \ \ \frac{l_{\rho}-l_{\tilde{\rho} }}{k+2} \in 2\mathbb{Z}\ \ {\rm or}\ \ \frac{l_{\rho}+l_{\tilde{\rho} }-k}{k+2} \in 2\mathbb{Z}+1. \nonumber 
		\end{equation}
	\end{proposition}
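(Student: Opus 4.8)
The plan is to reduce $\tau_\rho$ to one of the frame vectors $\tau_0,\dots,\tau_k$ by means of the fusion (reflection) rules for the Chebyshev polynomials $U_n$ in $R_k$, so that $w_\rho$ becomes one of the functions $w_0,\dots,w_k$ of Proposition \ref{prop2.1}. First I would identify $P(\rho)$. For an irreducible $(\rho,V)$ with highest weight $l_\rho$ we have $\rho \simeq \pi_{l_\rho}$, so by Lemma \ref{lem4.1} and the identification $R_k \simeq R({\rm SU}_2)/I_k({\rm SU}_2)$ one obtains $[P(\rho)] = [U_{l_\rho}]$. For $l_\rho \le k$ this is already a basis vector; for $l_\rho > k$ I would reduce it using Lemma \ref{lem2.2} together with the value formula $U_n(X_j) = \sin((n+1)j\pi/(k+2))/\sin(j\pi/(k+2))$, which follows from the definition of $U_n$ and the roots $X_j = \cos(j\pi/(k+2))$ of Lemma \ref{lem2.3}.

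Since $\sin(m\,j\pi/(k+2))$ (with $m=n+1$) is $2(k+2)$-periodic in $m$ and satisfies $\sin((2(k+2)-m)j\pi/(k+2)) = -\sin(m\,j\pi/(k+2))$, Lemma \ref{lem2.2} yields the reflection rule $[U_{2k+2-n}] = -[U_n]$, and hence, for every $l_\rho \not\equiv k+1 \pmod{k+2}$, an identity $[U_{l_\rho}] = \varepsilon_\rho [U_{r_\rho}]$ with a unique $r_\rho \in \{0,\dots,k\}$ and a sign $\varepsilon_\rho = \pm 1$ (the excluded values give $[U_{l_\rho}] = 0$). Thus $\tau_\rho = \varepsilon_\rho \tau_{r_\rho}$, and since $g$ is Hermitian the sign drops out, $g(\tau_\rho,\tau_\rho) = g(\tau_{r_\rho},\tau_{r_\rho})$, so $w_\rho = w_{r_\rho}$. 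By Proposition \ref{prop2.1}, $w_{r_\rho}$ is a solution of the sinh-Gordon equation, which proves the first assertion.

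For the second assertion I would first note that, for the data $(l_0,\dots,l_k) = (0,\dots,k)$ with $n = k$, Proposition \ref{prop3.4} gives the pairwise distinct asymptotic slopes $m_j = (2j-k)/(k+2)$, $j = 0,\dots,k$. By the one-to-one correspondence between global sinh-Gordon solutions and their asymptotic slopes in $[-1,1]$ (\cite{MTW1977}, \cite{GIL20151}), the functions $w_0,\dots,w_k$ are pairwise distinct; hence $w_\rho = w_{\tilde\rho}$ holds precisely when $r_\rho = r_{\tilde\rho}$. Writing $m = l_\rho + 1$ and $m' = l_{\tilde\rho}+1$ and folding the residues modulo $2(k+2)$ by the reflection $m \mapsto 2(k+2)-m$, one finds $r_\rho = r_{\tilde\rho}$ iff $m' \equiv \pm m \pmod{2(k+2)}$. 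The case $m' \equiv m$ reads $l_\rho - l_{\tilde\rho} \equiv 0 \pmod{2(k+2)}$, i.e. $(l_\rho - l_{\tilde\rho})/(k+2) \in 2\mathbb{Z}$, and the case $m' \equiv -m$ reads $l_\rho + l_{\tilde\rho} + 2 \equiv 0 \pmod{2(k+2)}$, i.e. $(l_\rho + l_{\tilde\rho} - k)/(k+2) \in 2\mathbb{Z}+1$; these are exactly the two alternatives in the statement.

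The main obstacle is the bookkeeping in the reduction step: establishing the reflection rule $[U_{2k+2-n}] = -[U_n]$ and the induced fundamental-domain map $l_\rho \mapsto r_\rho$ carefully enough that the congruence description of $r_\rho = r_{\tilde\rho}$ is exact. Two points require attention but are ultimately harmless: the sign $\varepsilon_\rho$ never interferes, because $g$ is Hermitian and sesquilinear so $g(\varepsilon_\rho\tau_{r_\rho}, \varepsilon_\rho\tau_{r_\rho}) = g(\tau_{r_\rho},\tau_{r_\rho})$; and the degenerate locus $l_\rho \equiv k+1 \pmod{k+2}$, where $[U_{l_\rho}] = 0$ and $\tau_\rho$ vanishes, must be set aside so that $w_\rho$ is defined.
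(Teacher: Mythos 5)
Your proof is correct and takes essentially the same route as the paper: the paper's Lemma \ref{lem4.2} is precisely your periodicity/reflection rule $[U_{n+2(k+2)}]=[U_n]$, $[U_{2k+2-n}]=-[U_n]$ (proved there by the same sine identities via Lemma \ref{lem2.2}), and your congruence bookkeeping with $m=l_\rho+1$ modulo $2(k+2)$ matches the paper's computation with $l_\rho=\alpha(k+2)+\beta$ and the parity of $\alpha$. Your write-up additionally makes explicit two points the paper leaves implicit --- that $w_0,\dots,w_k$ are pairwise distinct because their asymptotic slopes $(2j-k)/(k+2)$ from Proposition \ref{prop3.4} are distinct, and that the degenerate locus $l_\rho \equiv k+1 \ ({\rm mod}\ k+2)$, where $\tau_\rho$ vanishes, must be set aside --- which is a mild gain in rigor rather than a different method.
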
 \vspace{3mm}
	
	To prove Proposition \ref{prop4.3}, we use the following Lemma. \vspace{3mm}
	
	\begin{lem}\label{lem4.2}
		For \(\alpha \in \mathbb{Z}_{\ge 0}, 0 \le \beta \le k+1 \), we have
		\begin{equation}
			[U_{\alpha(k+2)+\beta}] = \left\{\begin{array}{ccc}
				0 & {\rm if}\ \beta = k+1, \nonumber \\
				\lbrack U_{\beta} \rbrack & {\rm if}\ \alpha:{\rm even },\ 0 \le \beta \le k. \nonumber \\
				-[U_{k-\beta}] & {\rm if}\ \alpha:{\rm odd },\ 0 \le \beta \le k. \nonumber
			\end{array}
			\right.
		\end{equation}
	\end{lem}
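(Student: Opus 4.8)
The plan is to reduce everything to an evaluation at the roots of \(dT_{k+2}/dX\) and to exploit the trigonometric form of the second-kind Chebyshev polynomials. By Lemma \ref{lem2.2}, to identify \([U_{\alpha(k+2)+\beta}]\) with a prescribed class it suffices to check that the two representing polynomials agree at all the roots \(X_j = \cos\!\left(\frac{j}{k+2}\pi\right)\), \(j \in J = \{1,\dots,k+1\}\) (Lemma \ref{lem2.3}). Writing \(\theta_j = \frac{j}{k+2}\pi\) and using \(U_n(\cos Y) = \sin((n+1)Y)/\sin Y\), I would evaluate
\[
U_{\alpha(k+2)+\beta}(X_j) = \frac{\sin\!\big((\alpha(k+2)+\beta+1)\theta_j\big)}{\sin\theta_j}.
\]

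The key observation is that \((\alpha(k+2)+\beta+1)\theta_j = \alpha j\pi + (\beta+1)\theta_j\), so the angle-addition formula together with \(\sin(\alpha j\pi)=0\) and \(\cos(\alpha j\pi)=(-1)^{\alpha j}\) gives
\[
U_{\alpha(k+2)+\beta}(X_j) = (-1)^{\alpha j}\,\frac{\sin((\beta+1)\theta_j)}{\sin\theta_j} = (-1)^{\alpha j}\,U_\beta(X_j)
\]
for every \(j \in J\). This single identity controls all three cases. If \(\beta = k+1\), then \(U_{k+1}(X_j) = \sin((k+2)\theta_j)/\sin\theta_j = \sin(j\pi)/\sin\theta_j = 0\), so the displayed identity forces \(U_{\alpha(k+2)+(k+1)}(X_j)=0\) for all \(j\), whence \([U_{\alpha(k+2)+(k+1)}]=0\). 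If \(\alpha\) is even, then \((-1)^{\alpha j}=1\) and the identity reads \(U_{\alpha(k+2)+\beta}(X_j)=U_\beta(X_j)\) for all \(j\), so \([U_{\alpha(k+2)+\beta}]=[U_\beta]\) by Lemma \ref{lem2.2}.

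The only case requiring an extra manipulation --- and the one I expect to be the main (minor) obstacle --- is \(\alpha\) odd, where the identity gives \(U_{\alpha(k+2)+\beta}(X_j) = (-1)^j U_\beta(X_j)\) and I must recognize the right-hand side as \(-U_{k-\beta}(X_j)\). For this I would use the complementary-angle relation \((k-\beta+1)\theta_j = (k+2)\theta_j - (\beta+1)\theta_j = j\pi - (\beta+1)\theta_j\), which yields
\[
\sin((k-\beta+1)\theta_j) = \sin\!\big(j\pi-(\beta+1)\theta_j\big) = -(-1)^j \sin((\beta+1)\theta_j),
\]
and hence \(U_{k-\beta}(X_j) = -(-1)^j U_\beta(X_j)\), i.e. \((-1)^j U_\beta(X_j) = -U_{k-\beta}(X_j)\). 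Comparing with the \(\alpha\) odd case of the identity above gives \(U_{\alpha(k+2)+\beta}(X_j) = -U_{k-\beta}(X_j)\) for all \(j \in J\), and Lemma \ref{lem2.2} gives \([U_{\alpha(k+2)+\beta}] = -[U_{k-\beta}]\).

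This settles all three cases. As a consistency check on the signs, the specialization \(\alpha=1\) reproduces exactly the statement of Lemma \ref{lem2.4}, which is reassuring. The whole argument is entirely in the spirit of the earlier lemmas: the substantive content is the extraction of the factor \((-1)^{\alpha j}\) from the periodicity of the sine, and everything else is routine bookkeeping of the three parity cases.
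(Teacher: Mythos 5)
Your proposal is correct and follows essentially the same route as the paper: evaluate \(U_{\alpha(k+2)+\beta}\) at the roots \(X_j=\cos(\tfrac{j}{k+2}\pi)\) using \(U_n(\cos Y)=\sin((n+1)Y)/\sin Y\), simplify via periodicity/reflection of the sine, and conclude with Lemma \ref{lem2.2}. The only (cosmetic) differences are that you extract the factor \((-1)^{\alpha j}\) by angle addition and treat \(\beta=k+1\) uniformly for all \(\alpha\), whereas the paper manipulates the sine argument directly in each parity case; the substance is identical.
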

	\begin{proof}
		We put \(j = \alpha(k+2) + \beta,\ \alpha \in \mathbb{Z}_{\ge 0},\ 0 \le \beta \le k+1 \). For \(l = 1,\cdots,k+1 \), we have
		\begin{equation}
			U_{k+1}\left(\cos{\left(\frac{l}{k+2}\pi \right)} \right) = \frac{\sin{\left(\frac{k+2}{k+2} l\pi \right)}}{\sin{\left(\frac{l}{k+2}\pi \right) }} = 0, \nonumber
		\end{equation}
		and if \(\alpha \) is even, \(0 \le \beta \le k \) then we have
		\begin{align}
			U_j\left(\cos{\left(\frac{l}{k+2}\pi \right)} \right) &=  \frac{\sin{\left(\frac{j+1}{k+2} l\pi \right)}}{\sin{\left(\frac{l}{k+2}\pi \right) }} = \frac{\sin{\left(\frac{\alpha(k+2)+\beta+1}{k+2} l\pi \right)}}{\sin{\left(\frac{l}{k+2}\pi \right) }} \nonumber \\
			&=  \frac{\sin{\left(\frac{\beta+1}{k+2} l\pi \right)}}{\sin{\left(\frac{l}{k+2}\pi \right) }} \nonumber \\
			&= U_{\beta}\left(\cos{\left(\frac{l}{k+2}\pi \right)} \right), \nonumber
		\end{align}
		if \(\alpha \) is odd, \(0 \le \beta \le k \) then we have
		\begin{align}
			U_j\left(\cos{\left(\frac{l}{k+2}\pi \right)} \right) &= \frac{\sin{\left(\frac{\alpha(k+2)+\beta+1}{k+2} l\pi \right)}}{\sin{\left(\frac{l}{k+2}\pi \right) }} \nonumber \\
			&= -\frac{\sin{\left(-\frac{\alpha(k+2)+\beta+1}{k+2} l\pi + (\alpha + 1)l\pi \right)}}{\sin{\left(\frac{l}{k+2}\pi \right) }} \nonumber \\
			&= -\frac{\sin{\left(\frac{k-\beta+1}{k+2} l\pi \right)}}{\sin{\left(\frac{l}{k+2}\pi \right) }} \nonumber \\
			&= -U_{k-\beta}\left(\cos{\left(\frac{l}{k+2}\pi \right)} \right). \nonumber
		\end{align}
		From Lemma \ref{lem2.2}, we obtain the stated result. \\
	\end{proof} \vspace{3mm}

	\begin{proof}[Proof of Proposition \ref{prop4.3}]
		Since \(\rho \) is irreducible, \(\rho \) is isomorphic to \(\pi_{l_[\rho] } \) and then \(P(\rho) = [U_{l_{\rho}}] \). From Lemma \ref{lem4.2}, for \(l_{\rho} = \alpha(k+2)+\beta,\ \alpha \in \mathbb{Z}_{\ge 0}, 0 \le \beta \le k+1 \) we have
		\begin{equation}
			[U_{l_{\rho}}] = \left\{\begin{array}{ccc}
				0 & {\rm if}\ \beta = k+1, \nonumber \\
				\lbrack U_{\beta} \rbrack & {\rm if}\ \alpha:{\rm even },\ 0 \le \beta \le k. \nonumber \\
				-[U_{k-\beta}] & {\rm if}\ \alpha:{\rm odd },\ 0 \le \beta \le k. \nonumber
			\end{array}
			\right.
		\end{equation}
		Thus, \(w_{\rho} \) is a solution of the sinh-Gordon equation. \vspace{1mm} \\
		Next, We put \(l_{\rho} = \alpha(k+2)+\beta, l_{\tilde{\rho}} = \tilde{\alpha}(k+2)+\tilde{\beta}\), where \(\alpha, \tilde{\alpha} \in \mathbb{Z}_{\ge 0},\ 0 \le \beta,\ \tilde{\beta} \le k+1 \). Suppose that \(w_{\rho} = w_{\tilde{\rho}} \). Since holomorphic data of \(g \) is given by \((l_0,\cdots,l_k) = (0,\cdots,k) \), from Lemma \ref{lem4.2} we see that \(\alpha,\beta,\tilde{\alpha},\tilde{\beta} \) satisfy \(\alpha - \tilde{\alpha} \in 2\mathbb{Z},\ \beta = \tilde{\beta} \) or \(\alpha + \tilde{\alpha} \in 2 \mathbb{Z}+1,\ \beta = k - \tilde{\beta} \). Thus, we obtain
		\begin{equation}
			\frac{l_{\rho}-l_{\tilde{\rho}}}{k+2} \in 2\mathbb{Z},\ \ {\rm or}\ \ \frac{l_{\rho}+l_{\tilde{\rho}}-k}{k+2} \in 2\mathbb{Z}+1. \nonumber 
		\end{equation}
		Conversely, suppose that \(l_{\rho},l_{\tilde{\rho}} \) satisfy the condition above. Then we have
		\begin{equation}
			\alpha - \tilde{\alpha} + \frac{\beta-\tilde{\beta}}{k+2} \in 2\mathbb{Z} \ \ {\rm or}\ \ \alpha + \tilde{\alpha} + \frac{\beta+\tilde{\beta}-k}{k+2} \in 2\mathbb{Z}+1, \nonumber
		\end{equation}
		and then \(\alpha - \tilde{\alpha} \in 2\mathbb{Z},\ \beta = \tilde{\beta} \) or \(\alpha + \tilde{\alpha} \in 2 \mathbb{Z}+1,\ \beta = k - \tilde{\beta} \). Since holomorphic data of \(g \) is given by \((l_0,\cdots,l_k) = (0,\cdots,k) \), we obtain \(w_{\rho} = w_{\tilde{\rho}} \). \\
	\end{proof} \vspace{3mm}
	
	In this subsection, we give a relation between irreducible representations and Smyth potentials. It can be shown in my unpublished paper that we can use the following Theorem for the classification of Smyth potentials. \vspace{3mm}
	
	\begin{theorem}[T.Udagawa, in preparation] \label{thm4.4}
		Let
		\begin{equation}
			\xi_j = \frac{1}{\lambda} \left(\begin{array}{cc}
				0 & z^j \\
				z^{k-j} & 0
			\end{array} \right)dz,\ \ \ \xi_j = \frac{1}{\lambda} \left(\begin{array}{cc}
				0 & z^l \\
				z^{k-l} & 0
			\end{array} \right)dz,\ \ \ j,l \in \mathbb{Z}, \nonumber
		\end{equation}
		be Smyth potentials on \(\mathbb{C} - (-\infty,0] \). Then
		\begin{equation}
			\exists C \in \Lambda^+ {\rm SL}_2 \mathbb{C}\ \ \ {\rm s.t.}\ \ \ \xi_l = \xi_j \cdot C, \nonumber
		\end{equation}
		if and only if \(j,l \) satisfy the condition
		\begin{equation}
			\frac{j-l}{k+2} \in 2\mathbb{Z}\ \ {\rm or}\ \ \frac{j+l-k}{k+2} \in 2\mathbb{Z}+1. \nonumber
		\end{equation}
	\end{theorem}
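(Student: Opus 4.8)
My plan is to read the relation $\xi_l=\xi_j\cdot C$ as the DPW gauge action $\xi_j\cdot C=C^{-1}\xi_jC+C^{-1}dC$ induced by the frame change $\phi\mapsto\phi C$ (with plain right multiplication the claim would be vacuous, since $\mathrm{diag}(z^{j-l},z^{l-j})$ already satisfies $\xi_l=\xi_jC$), and to prove it by combining explicit gauge transformations with the harmonic-map interpretation. Both potentials share the total exponent $j+(k-j)=l+(k-l)=k$, so the normalisation constant is $n=k$ and, by Proposition \ref{prop3.4}, the sinh-Gordon solution $w^{(j)}$ attached to $\xi_j$ has asymptotic datum $m_j=(2j-k)/(k+2)$.

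For the ``if'' direction I would first rephrase the condition as the statement that $j$ and $l$ lie in a common orbit of the infinite dihedral group acting on $\mathbb{Z}$ by the reflection $R\colon j\mapsto 2k+2-j$ and the translation $j\mapsto j+2(k+2)$. To realise $R$ I would try the upper-triangular ansatz
\begin{equation}
C=\left(\begin{array}{cc} z^{\,j-k-1} & (j-k-1)\lambda \\ 0 & z^{\,k+1-j}\end{array}\right)\in\Lambda^+{\rm SL}_2\mathbb{C},\nonumber
\end{equation}
and compute $\xi_j\cdot C$ by hand. A direct calculation shows that the diagonal part of $C^{-1}dC$ cancels the diagonal part produced by $C^{-1}\xi_jC$, that the vanishing of the $\lambda^{1}$-coefficient forces the upper entry to equal $j-k-1$ and simultaneously fixes $l=2k+2-j$, and that the surviving $\lambda^{-1}$-term is exactly $\xi_{2k+2-j}$. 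The companion lower-triangular gauge (obtained by the substitution $1\leftrightarrow 2$, i.e. $j\mapsto k-j$) realises the reflection $j\mapsto -2-j$, and composing the two reflections produces the translation $j\mapsto j+2(k+2)$. These explicit loops therefore generate the entire dihedral orbit, which is precisely the set cut out by the condition; this settles the ``if'' direction with no appeal to DPW.

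For the ``only if'' direction I would use that $\xi_l=\xi_j\cdot C$ with $C\in\Lambda^+{\rm SL}_2\mathbb{C}$ forces the frames to satisfy $\phi_l=\phi_jC$, so $C$ is absorbed into the $\Lambda^+$-factor of the Iwasawa splitting and the unitary factors, hence the harmonic maps and the solutions, coincide: $w^{(j)}=w^{(l)}$. By the ``if'' direction each $\xi_j$ is gauge-equivalent to $\xi_{j'}$, where $j'$ is the representative of its orbit in the fundamental domain $\{-1,0,\dots,k+1\}$, for which $m_{j'}=(2j'-k)/(k+2)\in[-1,1]$; since these $k+3$ rational numbers are pairwise distinct, the one-to-one correspondence between solutions and asymptotic data (\cite{GIL20151}) forces $j'=l'$, i.e. $j$ and $l$ lie in one orbit. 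Equivalently, since $w^{(j)}$ is the solution $w_\rho$ of Proposition \ref{prop4.3} for the irreducible representation of highest weight $j$, the characterisation of coincident solutions can be quoted verbatim from Proposition \ref{prop4.3} and Lemma \ref{lem4.2}.

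The main obstacle is the interface between the bare gauge computation and the harmonic-map correspondence. Checking $C\in\Lambda^+{\rm SL}_2\mathbb{C}$ is routine: the entries $z^{\pm(j-k-1)}$ are holomorphic and nowhere zero on $\mathbb{C}-(-\infty,0]$ because the origin is excluded, and $C$ is linear in $\lambda$ with $\det C=1$. However, this gauge does not extend holomorphically across $z=0$ unless $j=k+1$, and far from being a defect this is exactly the mechanism that renormalises an asymptotic datum lying outside $[-1,1]$ back into the fundamental domain. The delicate point, on which I expect to spend the most care, is therefore to justify that a gauge which is singular at the puncture still yields an identical solution; here I would lean on the global Iwasawa splitting on $\mathbb{C}-(-\infty,0]$ and the choice of $\phi_0,z_0$ recorded in the Remark after the first Proposition of Section 3, together with the uniqueness of the Iwasawa factorisation.
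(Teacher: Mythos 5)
Your ``if'' direction is correct and is essentially the paper's own proof. The paper likewise exhibits explicit triangular gauges \(C_+\), \(C_-\) realizing the reflections \(j\mapsto 2k+2-j\) and \(j\mapsto -2-j\), and obtains the translations \(j\mapsto j\pm 2(k+2)\) by composing them; your infinite-dihedral-orbit formulation is exactly the content of the paper's four composition formulas, and your determination of the constant entry \((j-k-1)\lambda\) by cancelling the diagonal and the \(\lambda^{1}\)-term is the correct computation for the action \(\xi\cdot C=C^{-1}\xi C+C^{-1}dC\) (the paper's \(C_{\pm}\) carry the opposite sign on the off-diagonal entry, which is a conventional or typographical discrepancy, not a substantive one).

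The ``only if'' direction is where you depart from the paper, and your route has a genuine gap. The paper's sketch analyses a hypothetical gauge \(C=\sum_{i\ge 0}C_i(z)\lambda^i\) directly: the equation \(dC=C\xi_l-\xi_jC\) gives a recursion on the coefficients \(C_i\), and the arithmetic condition is extracted from the requirement that the resulting series define an element of \(\Lambda^+{\rm SL}_2\mathbb{C}\). You instead argue: gauge equivalence \(\Rightarrow\) equality of the associated global solutions \(\Rightarrow\) equality of asymptotic data \(\Rightarrow\) same orbit. The first implication is not established by your Iwasawa-absorption argument. Right multiplication by \(C\) preserves the unitary factor of the \emph{given} frame: \(\phi_jC\) is indeed a frame for \(\xi_l\) whose Iwasawa unitary part reproduces \(w^{(j)}\). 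But Proposition 3.3 attaches \(w^{(l)}\) to \(\xi_l\) through a \emph{normalized} frame \(\phi_l\) (the specific choice of \(\phi_0,z_0\) recorded in the Remark on the Iwasawa factorization in Section 3), and since both \(\phi_jC\) and \(\phi_l\) solve \(d\psi=\psi\,\xi_l\) on a simply connected domain, they differ by a constant loop on the \emph{left}: \(\phi_jC=M\phi_l\) with \(M=\phi_0\,C(z_0)\,\phi_0^{-1}\). Left multiplication by a constant loop that lies neither in \(\Lambda{\rm SU}_{1,1}\) nor in the stabilizer of the harmonic map is a dressing transformation and in general changes the solution; uniqueness of the Iwasawa factorization, which you invoke, says nothing about this. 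So your argument shows only that \(w^{(j)}\) arises from \(\xi_l\) under \emph{some} initial condition, not that \(w^{(j)}=w^{(l)}\). Closing the gap would require an initial-condition rigidity statement for Smyth potentials (every admissible normalization yields the same global radial solution, with asymptotic datum determined by the exponents alone), which is not contained in Proposition 3.3 and is of comparable difficulty to the series analysis the paper sketches. Note also that your fallback of quoting Proposition 4.3 is circular in spirit: in the paper, the equivalence ``same solution \(\Leftrightarrow\) gauge equivalent'' is Theorem 4.5, which is \emph{deduced from} Theorem 4.4 together with Proposition 4.3; the bridge ``gauge equivalent \(\Rightarrow\) same solution'' is never proved independently anywhere in the paper, so it cannot be borrowed in a proof of Theorem 4.4 itself.
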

	\begin{proof}[Sketch of Proof]
		We put
		\begin{equation}
			C_+= \left(
			\begin{array}{cc}
				z^{j-k-1 } & -\lambda (j-k-1 ) \\ 0 & z^{-j+k+1}
			\end{array}
			\right),
			\ C_- = \left(
			\begin{array}{cc}
				z^{j+1 }  & 0 \\   -\lambda (j+1)  & z^{-j-1}
			\end{array}
			\right), \nonumber
		\end{equation}
		then we have
		\begin{align}
			&\xi_{j+2m(k+2) } = \xi_j \cdot (\underbrace{C_-C_+ \cdots C_-C_+}_{2m} ), \nonumber\\
			&\xi_{j-2m(k+2) } = \xi_j \cdot (\underbrace{C_+C_- \cdots C_+C_-}_{2m} ), \nonumber\\
			&\xi_{-j+k-(2m+1)(k+2) } = \xi_j \cdot (\underbrace{C_-C_+ \cdots C_-C_+}_{2m}C_- ),\nonumber\\
			&\xi_{-j+k+(2m+1)(k+2) } = \xi_j \cdot (\underbrace{C_+C_- \cdots C_+C_-}_{2(m+1)}C_+ ),\ \ \ \ \ m \in \mathbb{N} \cup \{0\}. \nonumber
		\end{align}
		Thus, if \(j,l \) satisfy
		\begin{equation}
			\frac{j-l}{k+2} \in 2\mathbb{Z}\ \ {\rm or}\ \ \frac{j+l-k}{k+2} \in 2\mathbb{Z}+1, \nonumber
		\end{equation}
		then
		\begin{equation}
			\exists C \in \Lambda^+ {\rm SL}_2 \mathbb{C}\ \ \ {\rm s.t.}\ \ \ \xi_l = \xi_j \cdot C. \nonumber
		\end{equation}
		The inverse is given by investigating the condition on \(k,l \) when \(C = \sum_{i=0}^{\infty} C_j(z) \lambda^i \) converges for all \(\lambda \in S^1 \). \\
	\end{proof} \vspace{3mm}
	
	From Proposition \ref{prop4.3} and Theorem \ref{thm4.4}, we obtain the following Theorem. \vspace{3mm}\\
	
	\begin{theorem}\label{thm4.5}
		Fix \(k \in \mathbb{N} \). We consider the correspondence between
		\begin{enumerate}
			\item [(A)] equivalence classes of irreducible representations \((\rho,V) \) of \({\rm SU}_2 \), \vspace{2mm}
			
			\item [(B)] Smyth potentials
			\begin{equation}
				\xi_{\rho} = \frac{1}{\lambda} \left(\begin{array}{cc}
					0 & z^{l_{\rho}} \\
					z^{k-l_{\rho} } & 0
				\end{array} \right)dz, \nonumber
			\end{equation}
			where \(l_{\rho} \) is highest weight of \((\rho,V) \).
		\end{enumerate}
		Then we have
		\begin{equation}
			w_{\rho } = w_{\tilde{\rho} }\ \ \ \Longleftrightarrow \ \ \ \exists C \in \Lambda^+ {\rm SL}_2 \mathbb{C}\ \ {\rm s.t.}\ \ \xi_{\tilde{\rho}} = \xi_{\rho} \cdot C. \nonumber
		\end{equation}
	\end{theorem}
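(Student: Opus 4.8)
The plan is to obtain Theorem \ref{thm4.5} as an immediate consequence of Proposition \ref{prop4.3} and Theorem \ref{thm4.4}, by observing that both results reduce their respective equivalences to one and the same arithmetic condition on the highest weights. Since \((\rho,V)\) is irreducible, it is isomorphic to some symmetric power \(\pi_{l_{\rho}}\), where \(l_{\rho} \in \mathbb{Z}_{\ge 0}\) is its highest weight; hence the Smyth potential \(\xi_{\rho}\) in (B) is precisely the potential of Theorem \ref{thm4.4} with \(j = l_{\rho}\), and likewise \(\xi_{\tilde{\rho}}\) is the one with \(l = l_{\tilde{\rho}}\). The crucial structural point is that the exponent appearing in the Smyth potential is exactly the integer used as the highest weight in Proposition \ref{prop4.3}, so the two circles of ideas are joined at the level of the single integer \(l_{\rho}\).

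First I would recall from Proposition \ref{prop4.3} that
\[
w_{\rho} = w_{\tilde{\rho}} \iff \frac{l_{\rho} - l_{\tilde{\rho}}}{k+2} \in 2\mathbb{Z} \ \ \text{or} \ \ \frac{l_{\rho} + l_{\tilde{\rho}} - k}{k+2} \in 2\mathbb{Z}+1.
\]
Next I would apply Theorem \ref{thm4.4} with \(j = l_{\rho}\) and \(l = l_{\tilde{\rho}}\), which yields
\[
\exists\, C \in \Lambda^+ {\rm SL}_2\mathbb{C} \ \text{ s.t. } \ \xi_{\tilde{\rho}} = \xi_{\rho} \cdot C \iff \frac{l_{\rho} - l_{\tilde{\rho}}}{k+2} \in 2\mathbb{Z} \ \ \text{or} \ \ \frac{l_{\rho} + l_{\tilde{\rho}} - k}{k+2} \in 2\mathbb{Z}+1.
\]
Since the right-hand sides of these two equivalences are literally identical, chaining them gives the asserted biconditional, which completes the proof.

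In this approach the theorem carries essentially no analytic content of its own: the substance has already been absorbed into Proposition \ref{prop4.3} (which, via Lemma \ref{lem4.2}, relates \(w_{\rho}\) to the reduction of \([U_{l_{\rho}}]\) inside \(R_k\)) and into Theorem \ref{thm4.4} (whose nontrivial direction rests on deciding when the dressing series \(C = \sum_{i \ge 0} C_i(z)\lambda^i\) converges for all \(\lambda \in S^1\)). The only genuine verification is the bookkeeping: confirming that the exponent of the Smyth potential in (B) coincides with the highest weight of Proposition \ref{prop4.3}, and that Theorem \ref{thm4.4}, stated for all \(j,l \in \mathbb{Z}\), covers the full unbounded range of highest weights \(l_{\rho} \in \mathbb{Z}_{\ge 0}\) of irreducible representations (in contrast to the highest-weight-\(\le k\) hypothesis used elsewhere in this section). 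The main, though still minor, obstacle is therefore simply to match the two arithmetic conditions, after which the equivalence is automatic.
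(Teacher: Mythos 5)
Your proposal is correct and matches the paper's own argument: the paper gives no separate proof of Theorem \ref{thm4.5}, stating only that it follows by combining Proposition \ref{prop4.3} with Theorem \ref{thm4.4}, which is precisely the chaining of the two equivalences through the common arithmetic condition that you carry out. Your additional bookkeeping remarks (identifying the potential exponent with the highest weight, and noting that Theorem \ref{thm4.4} covers all \(j,l \in \mathbb{Z}\)) are consistent with, and slightly more explicit than, the paper's presentation.
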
 \vspace{3mm}
	
	As a Corollary, we obtain the following relation. \vspace{3mm}
	
	\begin{cor} \label{cor4.5}
		Fix \(k \in \mathbb{N} \). There is a one-to-one correspondence between
		\begin{enumerate}
			\item [(i)] solutions \(w \) of the sinh-Gordon equation with the asymptotic data
			\begin{equation}
				\left\{m = \frac{j+1}{k+2}\ | \ j = 0,\cdots,k \right\}, \nonumber
			\end{equation}
			
			\item [(ii)] equivalence classes of irreducible representations \((\rho,V) \) with highest weight \(j \le k \), \vspace{1mm}
			
			\item [(iii)] equivalence classes of Smyth potentials
			\begin{equation}
				[\xi_j]_+ = \left[\frac{1}{\lambda} \left(\begin{array}{cc}
					0 & z^j \\
					z^{k-j} & 0
				\end{array} \right)dz \right]_+, \nonumber
			\end{equation}
			where \([\cdot]_+ \) denotes the equivalence class by gauging action of \(\Lambda^+ {\rm SL}_2 \mathbb{C} \).
		\end{enumerate}
	\end{cor}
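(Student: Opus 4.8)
The plan is to route everything through the single index set $\{0,1,\dots,k\}$: an irreducible representation of $\mathrm{SU}_2$ is determined up to isomorphism by its highest weight $l_\rho$, and the constraint $l_\rho\le k$ singles out the classes $\pi_0,\dots,\pi_k$ appearing in (ii). I would then show that $\pi_j\mapsto[\xi_j]_+$ realizes (iii) and that $\pi_j\mapsto w_{\pi_j}$ realizes (i), each as a genuine bijection. The two bridges are already available: Theorem \ref{thm4.5} compares (ii) with (iii), while Proposition \ref{prop3.4} together with the classification of global sinh--Gordon solutions (\cite{MTW1977},\cite{GIL20151}) compares (ii) with (i). What remains is to check that, on the restricted range $0\le j\le k$, none of the three lists has repetitions, so that the correspondences are bijections rather than merely surjections with collapsing.

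For (ii)$\leftrightarrow$(iii), Theorem \ref{thm4.5} already gives $w_{\pi_j}=w_{\pi_l}\Leftrightarrow[\xi_j]_+=[\xi_l]_+$, so it suffices to prove that $[\xi_0]_+,\dots,[\xi_k]_+$ are pairwise distinct. I would apply the arithmetic criterion of Theorem \ref{thm4.4}: for $0\le j,l\le k$ the first alternative $\tfrac{j-l}{k+2}\in2\mathbb{Z}$ forces $j=l$, since $|j-l|\le k<k+2$; and the second alternative $\tfrac{j+l-k}{k+2}\in2\mathbb{Z}+1$ cannot occur, since $-k\le j+l-k\le k$ while every nonzero odd multiple of $k+2$ has absolute value at least $k+2$. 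Hence $\xi_j\sim\xi_l$ only when $j=l$, and $\pi_j\mapsto[\xi_j]_+$ is injective; surjectivity onto (iii) is immediate because that list is exactly $\{[\xi_j]_+\}_{j=0}^k$. (The reduction formulas of Theorem \ref{thm4.4}, equivalently Lemma \ref{lem4.2}, identify the class of an arbitrary $\xi_{l_\rho}$ with a unique $[\xi_j]_+$, $0\le j\le k$, apart from the degenerate weights $l_\rho\equiv k+1\pmod{k+2}$ where $[U_{l_\rho}]=0$; this is why the window $0\le j\le k$ is the natural system of representatives.)

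For (ii)$\leftrightarrow$(i), I would use that, with the holomorphic data fixed as $(l_0,\dots,l_k)=(0,\dots,k)$ before Proposition \ref{prop4.3}, the section attached to $\pi_j$ is $\tau_{\pi_j}=\sqrt{2(k+2)}\,[U_j]=\tau_j$, so $w_{\pi_j}$ is precisely the $j$-th component $w_j$ of the tt*-structure. By Proposition \ref{prop3.4}, with $l_j+l_{k-j}=k=n$, this solution has asymptotic data $m_j=\tfrac{l_j-l_{k-j}}{n+2}=\tfrac{2j-k}{k+2}$, i.e. the list $\bigl(\tfrac{-k}{k+2},\dots,\tfrac{k}{k+2}\bigr)$ of Section 4.1. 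These $k+1$ values are distinct and all lie in $(-1,1)$, so the sinh--Gordon classification attaches to each a unique global solution and every solution with such data arises; thus $\pi_j\mapsto w_{\pi_j}$ is a bijection onto (i). Composing with the previous step yields the three-way correspondence.

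The only genuinely delicate input is the arithmetic check in the second paragraph: confirming that both alternatives of the Theorem \ref{thm4.4}/Proposition \ref{prop4.3} criterion degenerate to $j=l$ on the window $0\le j\le k$, and that the corresponding $m_j$ land strictly inside $(-1,1)$ so the classification applies. Once this distinctness is in place, the identification of the three equivalence relations (isomorphism of representations, $\Lambda^+\mathrm{SL}_2\mathbb{C}$-gauge equivalence of potentials, and equality of asymptotic data) follows directly from the cited results, and the proof is a matter of assembling them.
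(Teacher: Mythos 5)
Your proof follows essentially the same route as the paper's: the paper's own proof consists of exactly the two citations you use, namely Proposition \ref{prop3.3} (equivalently \ref{prop3.4}) together with Theorem \ref{thm4.4} for (i) \(\Leftrightarrow\) (ii), and Theorems \ref{thm4.4} and \ref{thm4.5} for (ii) \(\Leftrightarrow\) (iii). What you add, and this is genuinely worthwhile since the paper leaves it implicit, is the arithmetic check that makes these citations into bijections: on the window \(0 \le j,l \le k\) the first alternative \(\frac{j-l}{k+2} \in 2\mathbb{Z}\) forces \(j=l\) since \(|j-l| \le k < k+2\), and the second alternative \(\frac{j+l-k}{k+2} \in 2\mathbb{Z}+1\) is impossible since \(|j+l-k| \le k < k+2\); hence the \(k+1\) classes in each list are pairwise distinct and nothing collapses.

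One step, however, is not justified as written. Via Proposition \ref{prop3.4} with \(n = k\) you compute the asymptotic data of the solution attached to \(\pi_j\) as \(m_j = \frac{2j-k}{k+2}\), i.e.\ the list \(\left(\frac{-k}{k+2},\dots,\frac{k}{k+2}\right)\), and then assert that \(\pi_j \mapsto w_{\pi_j}\) is ``a bijection onto (i)''. But item (i) of the statement lists the data \(m = \frac{j+1}{k+2}\), \(j = 0,\dots,k\), which is a \emph{different} set of numbers: one has \(\frac{j+1}{k+2} = \frac{1}{2}\left(\frac{2j-k}{k+2}+1\right)\), so the two lists differ by the affine change \(m \mapsto \frac{m+1}{2}\). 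Your computation is the one consistent with Corollary \ref{cor4.2} and with the data \(\left(\frac{-k}{k+2},\dots,\frac{k}{k+2}\right)\) displayed in Section 4.1, so the discrepancy almost certainly reflects a slip or an unexplained renormalization in the statement of (i) rather than an error in your argument; still, as it stands your map does not land in the set that (i) literally describes, and you pass over this silently. You should either state explicitly that (i) must be read as \(m = \frac{2j-k}{k+2}\) (equivalently \(\frac{2(j+1)}{k+2}-1\)), or supply the conversion between the two normalizations; with that one clarification the assembly of Proposition \ref{prop3.4}, Theorem \ref{thm4.4}, and Theorem \ref{thm4.5} goes through exactly as you and the paper intend.
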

	\begin{proof}
		We obtain \((i) \Leftrightarrow (ii) \) from Proposition \ref{prop3.3} and Theorem \ref{thm4.4}. We obtain \((ii) \Leftrightarrow (iii) \) from Theorem \ref{thm4.4} and Theorem \ref{thm4.5}. \\
	\end{proof} \vspace{3mm}
	
	Thus, we obtain the relation between solutions, irreducible representations and Smyth potentials. From Corollary \ref{cor4.5}, the holomorphic data \(l_j = j \) can be interpreted as irreducible representations with highest weight \(j \). In physics, irreducible representations can be seen as vacua in the \(N=2 \) sinh-Gordon model (see \cite{CV19932}). \vspace{2mm}
	
	We remark that our analysis for the \(({\rm SU}_2)_k \)-fusion ring can be generalized to the \(({\rm SU}_n)_k \)-fusion algebra for any \(n \ge 2 \). This more general situation was also considered by Cecotti and Vafa (\cite{CV1991}). In this case, a solution of the equation corresponds to a finite number of solutions of the tt*-Toda equation.

	
	
	
	
	\subsection*{Acknowledgement}
	The author would like to thank Professor Martin Guest for his considerable support and thank Professor Takashi Otofuji for useful conversations.
	
	
	\bibliography{mybibfile}
	\bibliographystyle{plain}

\end{document}